\documentclass[lettersize,journal]{IEEEtran}
\usepackage{amsmath,amsfonts}
\usepackage{algorithmic}
\usepackage{algorithm}
\usepackage{array}
\usepackage[caption=false,font=normalsize,labelfont=sf,textfont=sf]{subfig}
\usepackage{textcomp}
\usepackage{stfloats}
\usepackage{url}
\usepackage{verbatim}
\usepackage{graphicx}
\hyphenation{op-tical net-works semi-conduc-tor IEEE-Xplore}
\def\BibTeX{{\rm B\kern-.05em{\sc i\kern-.025em b}\kern-.08em
    T\kern-.1667em\lower.7ex\hbox{E}\kern-.125emX}}
\usepackage{balance}
\usepackage{amssymb,amsthm}
\usepackage{enumitem}
\usepackage[normalem]{ulem}
\usepackage{booktabs}

\newtheorem{assumption}{Assumption}
\newtheorem{theorem}{Theorem}

\newtheorem{remark}{Remark}
\newtheorem{lemma}{Lemma}


\begin{document}

\title{A Control Theory inspired Exploration Method for a Linear Bandit driven by a Linear Gaussian Dynamical System}
\author{Jonathan Gornet, \IEEEmembership{Student Member, IEEE}, Yilin Mo, \IEEEmembership{Senior Member, IEEE},\\ and Bruno Sinopoli, \IEEEmembership{Fellow, IEEE}%
\thanks{J. Gornet and B. Sinopoli are with the Department of Electrical and Systems Engineering, Washington University in St. Louis, St. Louis, MO 63130, USA (email: jonathan.gornet@wustl.edu; bsinopoli@wustl.edu).}
\thanks{Yilin Mo is with the Department of Automation, Tsinghua University,
        Beijing, China 100084 (email: ylmo@tsinghua.edu.cn). }%
}

\maketitle

\begin{abstract}
    The paper introduces a linear bandit environment where the reward is the output of a known Linear Gaussian Dynamical System (LGDS). In this environment, we address the fundamental challenge of balancing exploration---gathering information about the environment---and exploitation---selecting to the action with the highest predicted reward. We propose two algorithms, Kalman filter Upper Confidence Bound (Kalman-UCB) and Information filter Directed Exploration Action-selection (IDEA). Kalman-UCB uses the principle of optimism in the face of uncertainty. IDEA selects actions that maximize the combination of the predicted reward and a term that quantifies how much an action minimizes the error of the Kalman filter state prediction, which depends on the LGDS property called observability. IDEA is motivated by applications such as hyperparameter optimization in machine learning. A major problem encountered in hyperparameter optimization is the large action spaces, which hinder the performance of methods inspired by principle of optimism in the face of uncertainty as they need to explore each action to lower reward prediction uncertainty. To predict if either Kalman-UCB or IDEA will perform better, a metric based on the LGDS properties is provided. This metric is validated with numerical results across a variety of randomly generated environments.     
\end{abstract}

\begin{IEEEkeywords}
Non-stationary Stochastic Multi-armed Bandits, Kalman filters, Stochastic Dynamical Systems
\end{IEEEkeywords}

\section{Introduction}\label{sec:Introduction}

The Stochastic Multi-Armed Bandit (SMAB) problem \cite{lattimore2020bandit} is a well-known framework for modeling decision-making under uncertainty. It has inspired algorithms that address real-world challenges such as hyperparameter optimization in machine learning, which are presented as the Hyperband algorithm introduced in \cite{li2017hyperband} or Bayesian optimization methods as reviewed in \cite{garnett2023bayesian}. In SMAB, there exists a learner and an environment that interact for a set number of iterations called a round. For each round, the learner chooses an action and in response the environment reveals a reward sampled from an unknown distribution dependent on the chosen action. The objective is to maximize the accumulated reward over a horizon length. This framework leads to the problem of \textit{exploration} (how much information the learner gathers about the environment) versus \textit{exploitation} (how much the learner commits to an action that it predicts to return the highest reward).

A well-known strategy for approaching \textit{exploration} versus \textit{exploitation} is the principle of optimism in the face of uncertainty. The principle states that the learner chooses the highest predicted reward within a set confidence level \cite{lattimore2020bandit}. Lai and Robbins \cite{lai1987adaptive} has implemented this principle by introducing the Upper Confidence Bound (UCB) algorithm, which was analyzed by Auer, Cesa-Bianchi, and Fischer in \cite{auer2002finite}. The motivation for the wide-spread use of the principle of optimism in the face of uncertainty such as UCB is its closeness to the regret lower bound (a bound of the lowest obtainable regret for \textit{any} algorithm) \cite{auer2002finite}. The principle was applied by Abbasi-Yadkori, P\'{a}l, and Szepesv\'{a}ri \cite{NIPS2011_e1d5be1c} to linear bandits, which is an environment where the reward is the inner product of a known action vector and an unknown linear parameter.

We introduce a linear bandit where the reward is output of a known Linear Gaussian Dynamical System (LGDS), i.e. the reward is the inner product of an action vector and a system state evolving linearly over time. Our key contribution includes two algorithms, Kalman filter Upper Confidence Bound (Kalman-UCB) and Information filter Directed Exploration Action-selection (IDEA). Both methods use the Kalman filter to predict the reward of the LGDS for each action and are inspired by the UCB algorithm. In Kalman-UCB, the learner selects the action that maximizes a combination of the predicted reward and a term proportional to the prediction error. For IDEA, the learner selects the action that maximizes the combination of the predicted reward and a term that measures how much an action minimizes the error the Kalman filter's state prediction. The motivation for IDEA is based on its applicability to hyperparameter optimization for training reinforcement learning neural networks. Previous results such as Parker-Holder, Nguyen, and Roberts \cite{parker2020provably}, which was based on theoretical developments made by Bogunovic, Scarlett, and Cevher \cite{bogunovic2016time}, have suggested modeling this problem as a LGDS. In this context, the number of actions, or hyperparameter configurations, vastly exceeds the number of rounds. For more details on the derivations and rationale for modeling the hyperparameter optimization problem as a LGDS, see Gornet, Kantaros, and Sinopoli in \cite{gornet2025hypercontrollerhyperparametercontrollerfast}. Finally, we provide a metric for comparing Kalman-UCB and IDEA to predict which method will perform best with respect to the LGDS properties.

The contributions of the paper are as follows. 
\begin{itemize}
    \item We formulate the linear bandit with an unknown parameter vector generated by a LGDS. 
    \item We prove that approaching this SMAB environment as an optimization problem leads to a situation where the optimal prediction and action selection are interconnected, implying that dynamic programming is computationally intractable. 
    \item For evaluating the difficulty of the proposed SMAB environment, we prove a lower bound on performance, which is a measure of the difficulty for consistently selecting the optimal action. 
    \item We propose the methods Kalman filter Upper Confidence Bound (Kalman-UCB) and Information filter Directed Exploration Action-selection (IDEA). Kalman-UCB is an UCB-inspired method. IDEA chooses the action that maximize the sum of the predicted reward and a term proportional to a measure of how much the action will decrease the error of the Kalman filter state prediction. 
    \item We introduce a metric for evaluating each method's relative effectiveness. 
    \item We verify our analysis with numerical results for a set of randomly generated LGDS that have parameters and noise statistics sampled from different distributions: the Gaussian, Cauchy, Uniform, Exponential, and Bernoulli distributions. 
\end{itemize}

The remainder of the paper is structured as follows: Section \ref{sec:Problem Formulation} introduces the linear bandit environment and its associated optimization problem. Optimal estimation and optimal control are reviewed in Subsection \ref{sec:Filtering} and Subsection \ref{sec:Dynamic} respectively. Section \ref{sec:Lower Bound of Environments modeled as LGDS} provides proofs on lower bounds, which are metrics of the linear bandit environment's difficulty. In Section \ref{sec:Optimism-Based Methods}, we introduce optimism-based methods, which are methods that select actions based on the highest predicted reward with a perturbation. Here, we review both Kalman-UCB in Subsection \ref{sec:UCB_Method} and IDEA in Subsection \ref{sec:Observer Method}. Section \ref{sec:Discussion} compares and analyzes both methods. Finally, in Section \ref{sec:Random_Numerical_Comparisons}, we provide numerical results. The paper is concluded in Section \ref{sec:Conclusion}. 

\subsection{Works Related to the Proposed Environment}

For our proposed environment, the reward is the output of a LGDS. When the LGDS is marginally stable or unstable, the reward process for each action becomes non-stationary due to changes in the reward distributions. The state-of-the-art result in non-stationary SMAB was presented by Besbes, Gur, and Zeevi \cite{besbes2014stochastic}, where they constrain the reward distributional changes to a \textit{variational budget}. Our environment is a specific case of the non-stationary bandit, the slowly-varying case, which introduces gradual changes in the reward distributions. In the slowly-varying case, Slivkins \cite{slivkins2008adapting} modeled each action's reward stochastic process as Brownian motion and analyzed well-known bandit algorithms for this environment. This framework has been extended by Chen, Golrezaei, and Bouneffouf \cite{chen2023non} to environments where the rewards follow action-independent $s$-step autoregressive processes. 

The linear bandit problem is well-studied in SMAB, initially proposed by Abe and Long in \cite{abe1999associative}. As mentioned earlier, the current state-of-the-art result is \cite{NIPS2011_e1d5be1c} which uses an UCB-inspired approach. Kuroki et al. have developed a method for addressing cases when either the linear parameter stochastically or adversarially changes \cite{kuroki2024best}, which is relevant to our work given the dynamic nature of the unknown linear parameter. 

Finally, our results are related to the restless bandit problem, which was initially introduced by Whittle in \cite{whittle1988restless}, where each action's reward process is based on an independent discrete-state Markov chain. For every round, the learner observes the reward which is a function of the Markov chain's state. Previous work has used UCB-inspired methods such as \cite{liu2012learning,tekin2012online,ortner2014regret,dai2011non,wang2020restless}, while a Thompson Sampling approach has been introduced in \cite{jung2019regret}. Currently, the state-of-the-art approach is Restless-UCB by Wang, Huang, and Lui \cite{wang2020restless}. Our results share similarities with this bandit environment, as the LGDS is a Markov chain with a continuous state-space while other restless bandit environments have a Markov chain with a discrete state-space. Since the reward for each action is the inner product of the action vector and the LGDS state variable, this structure introduces dependencies between each action's reward process, which are not modeled in current restless bandit models. 

\smallskip
\noindent\textbf{Notation:} For any $x\in\mathbb{R}^n$ and $y\in\mathbb{R}^n$, we have the inner product $\left\langle x, y\right\rangle = x^\top y \in \mathbb{R}$. The distribution $\mathcal{N}\left(\mu,\Sigma\right)$ is a normal distribution with a mean of $\mu \in \mathbb{R}^d$ and a covariance of $\Sigma \in \mathbb{R}^{d \times d}$.

\section{Problem Formulation}\label{sec:Problem Formulation}

In this work, we will be considering a linear bandit where the reward is the output of a known LGDS. For review, the reward $X_t \in \mathbb{R}$ sampled by the environment in the linear bandit has the following expression
\begin{equation}
    X_t = \left\langle a_t, z \right\rangle + \eta_t \nonumber, 
\end{equation}
where $a_t \in \mathcal{A} \subseteq \mathbb{R}^d$ is the learner's chosen action at round $t$, $z \in \mathbb{R}^d$ is the unknown parameter vector, and $\eta_t \in \mathbb{R}$ is zero-mean noise. For this paper, we will assume that the unknown parameter vector $z$ dynamically changes as according to the state variable $z_t$ in a known LGDS, i.e. 
\begin{equation}\label{eq:LGDS}
    \begin{cases}
        z_{t+1} & = \Gamma z_t + \xi_t, ~z_0 \sim \mathcal{N}\left(\mathbf{0},\Sigma_0\right)\\
        X_t & = \left\langle a_t, z_t\right\rangle + \eta_t
    \end{cases}, 
\end{equation}

In the LGDS above, $z_t \in \mathbb{R}^d$ is the system's state and $X_t \in \mathbb{R}$ is the reward. The variable $a_t \in \mathcal{A}$ is the action that the learner chooses. The process noise $\xi_t \in \mathbb{R}^d$ and measurement noise $\eta_t \in \mathbb{R}$ are independent Gaussian distributed, i.e. $\xi_t \sim \mathcal{N}\left(\mathbf{0},Q\right)$ and $\eta_t \sim \mathcal{N}\left(0,\sigma^2\right)$ where $Q \succeq \mathbf{0}$ and $\sigma > 0$. The following assumption is imposed for the action set $\mathcal{A}$:

\begin{assumption}\label{assumption:unit_sphere}
    The set of actions $\mathcal{A}$ is constrained to the unit sphere, i.e. 
    \begin{equation}
        \mathcal{A} \subseteq \mathbb{S}^{d-1} \triangleq \left\{a_t \in \mathbb{R}^d \mid \left\Vert a_t\right\Vert_2 = 1\right\}. 
    \end{equation}
\end{assumption}

Assumption \ref{assumption:unit_sphere} simplifies the considered problem by only analyzing the observability of \eqref{eq:LGDS}. A metric for observability is the \textit{Observability Gramian}, which is defined to be 
\begin{equation}\label{eq:observability_gramian}
    \mathcal{O}\left(\Gamma,t_0,t_1\right) \triangleq \sum_{\tau=t_0}^{t_1} \left(\Gamma^\top\right)^\tau a_\tau a_\tau^\top \Gamma^\tau \in \mathbb{R}^{d \times d}. 
\end{equation}

The system \eqref{eq:LGDS} is observable from round $t_0$ to $t_1$ if the \textit{Observability Gramian} $\mathcal{O}\left(\Gamma,t_0,t_1\right)$ is positive definite. 

\begin{assumption}\label{assumption:controllability}
    The matrix pair $\left(\Gamma,Q^{1/2}\right)$ is controllable. 
\end{assumption}

Assumption \ref{assumption:controllability} is a necessary condition for the existence of the LGDS's \eqref{eq:LGDS} Kalman filter. The intuition behind this assumption is that state vector $z_t$ is constantly perturbed by the process noise $\xi_t$. We will review later the Kalman filter. 

The goal of the learner is to maximize cumulative reward over a horizon of length $n$, i.e. $S_n = \sum_{t=1}^n X_t$. We assume for this work that the horizon length $n$ is known. This leads to the following optimization problem to be solved: 
\begin{equation}\label{eq:optimization_problem}
    \begin{array}{cc}
        \underset{a_1,\dots,a_n \in \mathcal{A}}{\max} & \sum_{t=1}^n \left\langle a_t, z_t \right\rangle\\
        \mbox{ s.t. } & \begin{cases}
        z_{t+1} & = \Gamma z_t + \xi_t, ~z_0 \sim \mathcal{N}\left(0,\Sigma_0\right)\\
        X_t & = \left\langle a_t, z_t\right\rangle + \eta_t
    \end{cases}
    \end{array}. 
\end{equation}
\begin{remark}
    In stochastic multi-armed bandits the metric for performance is regret which is the cumulative expected difference between the highest possible reward $X_t^*$ at each round $t$ and the sampled reward $X_t$ from the learner's chosen action $a_t \in \mathcal{A}$, i.e.
    \begin{equation}\label{eq:regret}
        R_n \triangleq \sum_{t=1} \mathbb{E}\left[X_t^* - X_t\right]. 
    \end{equation}
\end{remark}

\begin{remark}
    We define $a_t^*$ to be the action $a \in \mathcal{A}$ that aligns most closely with the state $z_t$, i.e. 
    \begin{equation}\label{eq:a_star_def}
        a_t^* \triangleq \underset{a \in \mathcal{A}}{\arg\max} ~ \left\langle a, z_t \right\rangle. 
    \end{equation}

    This can be interpreted as the \textit{Oracle} as the learner has full knowledge of the state variable $z_t \in \mathbb{R}^d$. 
\end{remark}


Maximizing cumulative reward from the linear bandit with an unknown linear parameter generated by a LGDS is difficult to solve. We will present this difficulty from two different perspectives. In \textit{Perspective 1}, Computational Tractability, we will attempt to solve optimization problem \eqref{eq:optimization_problem} which requires us to use dynamic programming. We will prove that approaching this dynamic programming problem leads to a situation where actions impact both the reward prediction error and the accumulated reward. Therefore, we encounter a nonconvex optimization problem in the dynamic programming problem. In \textit{Perspective 2}, Difficulty of Selecting the Optimal Action, we will analyze the difficulty of the bandit environment by deriving a lower bound for regret \eqref{eq:regret}. We will prove that the optimal method's regret must increase at least linearly, implying that it is difficult even for the optimal method to consistently select the optimal action.

\section{Perspective 1: Computational Intractability}\label{sec:difficulty}

In this section, we will provide insight into the computational intractability of solving the bandit problem optimally. First, we will review optimal estimation/prediction by using the Kalman filter. Next, optimal action selection will be reviewed focusing specifically on dynamic programming. We will then prove how optimal action selection and optimal estimation/prediction are interconnected. This will demonstrate how solving the bandit problem optimally is computationally intractable. In the second perspective, given that computationally intractability of the problem, we will derive a lower bound on regret, 

\subsection{Optimal Estimation: Kalman Filter}\label{sec:Filtering}

Since the state $z_t$ of LGDS \eqref{eq:LGDS} is unknown, then the reward $X_t$ is unknown until action $a_t \in \mathcal{A}$ is selected. Therefore, we propose to predict the state of the system \eqref{eq:LGDS}. Using the state prediction, we can predict which action $a_t \in \mathcal{A}$ will return the highest reward. The optimal 1-step predictor, in the mean squared error sense, of the LGDS's state $z_t$ is the Kalman filter. The Kalman filter (in 1-step predictor form) is written as follows:
\begin{equation}\label{eq:Kalman_Filter}
    \begin{cases}
        \hat{z}_{t+1|t} & = \Gamma \hat{z}_{t|t} + \Gamma  K_t \left(X_t - \left\langle a_t, \hat{z}_{t|t-1}\right\rangle\right) \\
        P_{t+1|t} & = g\left(P_{t|t-1},a_t\right) \\
        K_t & = P_{t|t-1} a_t\left(a_t^\top P_{t|t-1} a_t + \sigma^2\right)^{-1}
    \end{cases}. 
\end{equation}
where $g\left(P_{t|t-1},a\right)$ is defined to be 
\begin{multline}\label{eq:g_definition}
    g\left(P_{t|t-1},a_t\right) \triangleq \Gamma P_{t|t-1} \Gamma^\top + Q \\ - \Gamma P_{t|t-1} a_t \left(a_t^\top P_{t|t-1} a_t + \sigma\right)^{-1} a_t^\top P_{t|t-1} \Gamma^\top .
\end{multline}
    
The estimate of the state $z_t$ is defined to be $\hat{z}_{t|t} \triangleq \mathbb{E}\left[z_t \mid \mathcal{F}_t\right]$, where $\mathcal{F}_t$ is the sigma algebra generated by previous observations $X_0,\dots,X_t$. The matrix $P_{t|t-1}$ is the error covariance matrix of the state estimate $\hat{z}_{t|t-1}$, i.e. the covariance of $e_{t|t-1} \triangleq z_t - \hat{z}_{t|t-1}$. The error covariance matrix $P_{t|t-1}$ converges if the matrix pair $\left(\Gamma,a_t^\top\right)$ is detectable and $\left(\Gamma,Q^{1/2}\right)$ is controllable, where the controllability assumption is imposed in Assumption \ref{assumption:controllability}. The following lemma provides known facts about the Kalman filter \cite{sinopoli2005optimal}:

\begin{lemma}\label{lemma:Kalman_Facts}
    The following facts are true for the Kalman filter \eqref{eq:Kalman_Filter}:
    \begin{itemize}
        \item $\mathbb{E}\left[e_{t|t-1}^\top \hat{z}_{t|t-1}\mid \mathcal{F}_{t-1}\right] = 0$. 
        \item $\mathbb{E}\left[z_t^\top S z_t\mid \mathcal{F}_{t-1}\right] = \hat{z}_{t|t-1}^\top S \hat{z}_{t|t-1} + \mbox{tr}\left(S P_{t|t-1}\right)$ for all $S \succeq 0$.
        \item $\mathbb{E}\left[\mathbb{E}\left[z_t \mid \mathcal{F}_t\right]\mid \mathcal{F}_{t-1}\right] = \mathbb{E}\left[z_t \mid \mathcal{F}_{t-1}\right]$. 
    \end{itemize}
\end{lemma}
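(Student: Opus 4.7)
The plan is to handle the three items in order, using the orthogonality principle for (1), a bias-variance decomposition for (2), and the tower property for (3). A preliminary observation used throughout is that $\hat{z}_{t|t-1}$ and $P_{t|t-1}$ are both $\mathcal{F}_{t-1}$-measurable: this is immediate from the Kalman recursion \eqref{eq:Kalman_Filter}, since both quantities are constructed from $X_0,\dots,X_{t-1}$ together with the past actions (which are themselves functions of previous observations).

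For (1), I would pull $\hat{z}_{t|t-1}$ out of the conditional expectation using its $\mathcal{F}_{t-1}$-measurability, reducing the left-hand side to $\mathbb{E}[e_{t|t-1}^\top \mid \mathcal{F}_{t-1}]\, \hat{z}_{t|t-1}$. Then I would invoke the definition $\hat{z}_{t|t-1} \triangleq \mathbb{E}[z_t \mid \mathcal{F}_{t-1}]$ to conclude $\mathbb{E}[e_{t|t-1} \mid \mathcal{F}_{t-1}] = \mathbb{E}[z_t \mid \mathcal{F}_{t-1}] - \hat{z}_{t|t-1} = 0$, which kills the product.

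For (2), I would substitute $z_t = \hat{z}_{t|t-1} + e_{t|t-1}$ into $z_t^\top S z_t$ and expand into three terms. Taking conditional expectation given $\mathcal{F}_{t-1}$, the squared-mean term $\hat{z}_{t|t-1}^\top S \hat{z}_{t|t-1}$ passes through unchanged by measurability, the cross term vanishes by the same argument as (1) (now applied with the $\mathcal{F}_{t-1}$-measurable vector $S\hat{z}_{t|t-1}$ in place of $\hat{z}_{t|t-1}$), and the error term is rewritten by the cyclic property of trace as $\mbox{tr}\bigl(S\, \mathbb{E}[e_{t|t-1} e_{t|t-1}^\top \mid \mathcal{F}_{t-1}]\bigr) = \mbox{tr}(S P_{t|t-1})$. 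Fact (3) is then immediate from the tower property of conditional expectation applied to the nested filtration $\mathcal{F}_{t-1} \subseteq \mathcal{F}_t$.

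No serious obstacle is expected; the only subtlety worth flagging is the identification of $P_{t|t-1}$ as the \emph{conditional} (rather than merely unconditional) covariance of $e_{t|t-1}$, which is what lets the trace term in (2) collapse to a deterministic expression. This identification is standard in the Gaussian setting of \eqref{eq:LGDS}, where the conditional law of $z_t$ given $\mathcal{F}_{t-1}$ is $\mathcal{N}(\hat{z}_{t|t-1}, P_{t|t-1})$ with both parameters $\mathcal{F}_{t-1}$-measurable, and it is the only point in the argument where one uses more than linearity of expectation and the definition of $\hat{z}_{t|t-1}$.
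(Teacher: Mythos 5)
Your proposal is correct. Note that the paper itself gives no proof of this lemma---it is stated as a collection of known facts with a citation to the Kalman filtering literature---so there is no in-paper argument to compare against; your proof is the standard one that the cited reference would supply (orthogonality of the error to the $\mathcal{F}_{t-1}$-measurable prediction, the resulting bias--variance expansion of the quadratic form, and the tower property). You also correctly isolate the one nontrivial point, namely that $P_{t|t-1}$ is the \emph{conditional} covariance of $e_{t|t-1}$ given $\mathcal{F}_{t-1}$, which holds here because in the linear Gaussian setting with actions measurable with respect to past observations the Riccati iterate is determined by the action sequence and hence $\mathcal{F}_{t-1}$-measurable.
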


\subsection{Optimal Control: Dynamic Programming}\label{sec:Dynamic}

A common approach in optimal control theory for solving optimization problems \eqref{eq:optimization_problem} is to use a dynamic programming approach. The value function $V_t\left(z_t\right)$ is defined as follows 
\begin{equation}\label{eq:dynamic_programming}
    \begin{cases}
        V_n \left(z_n\right) & \triangleq \underset{a \in \mathcal{A}}{\max} ~ \mathbb{E}\left[\left\langle a, z_n \right\rangle \mid \mathcal{F}_{n-1}\right] \\
        V_t \left(z_t\right)& = \underset{a_t \in \mathcal{A}}{\max} ~ \mathbb{E}\left[\left\langle a_t, z_t \right\rangle + V_{t+1} \left(z_{t+1}\right)\mid \mathcal{F}_{t-1}\right]
    \end{cases},
\end{equation}
where $t=n-1,n-2,\dots,1$. Dynamic programming theory states that $V_1\left(z_1\right)$ is the optimal value of the optimization problem \eqref{eq:optimization_problem} \cite{bertsekas2012dynamic}. 

When using dynamic programming \eqref{eq:dynamic_programming} for solving \eqref{eq:optimization_problem}, it is proven in the theorem below that the Separation Principle does not hold. The Separation Principle in stochastic optimal control states that optimal estimation (the Kalman filter) and optimal control (solving optimization problem \eqref{eq:optimization_problem}) can be treated as separate problems \cite{georgiou2013separation}. However, the following theorem proves that optimal control and estimation are interconnected.

\begin{theorem}\label{theorem:Separation_Principle}
    Let there be the value function and its iteration defined in \eqref{eq:dynamic_programming}. The $n-1$ step of the value function iteration is a nonlinear function of the error covariance matrix $P_{n-1|n-2}$ and the expectation $\mathbb{E}\left[ \left\Vert z_n \right\Vert_2^2 \mid \mathcal{F}_{n-1}\right]$, which has the following expression:
    \begin{multline}\label{eq:n-1_step}
        V_{n-1} \left(z_{n-1}\right) = \underset{a \in \mathcal{A}}{\max} ~ \left\langle a, \hat{z}_{n-1|n-2} \right\rangle + \\ \mathbb{E}\left[\sqrt{\mathbb{E}\left[ \left\Vert z_n \right\Vert_2^2 \mid \mathcal{F}_{n-1}\right] -\mbox{tr}\left(g\left(P_{n-1|n-2},a\right)
        \right)} \mid \mathcal{F}_{n-2}\right]. 
    \end{multline}
\end{theorem}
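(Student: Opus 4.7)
The plan is to unroll the dynamic programming recursion \eqref{eq:dynamic_programming} backwards from the terminal stage, using Lemma \ref{lemma:Kalman_Facts} to rewrite quantities involving the true state $z_t$ in terms of the Kalman prediction $\hat{z}_{t|t-1}$ and the prediction error covariance $P_{t|t-1}$, and then to recognize the Riccati update as the channel through which the action $a_{n-1}$ influences the future value.

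First I would evaluate the terminal value $V_n(z_n)$. Since $a$ is chosen from $\mathcal{F}_{n-1}$ and $\mathbb{E}[z_n \mid \mathcal{F}_{n-1}] = \hat{z}_{n|n-1}$, the inner expectation collapses to $\langle a, \hat{z}_{n|n-1}\rangle$, so $V_n(z_n) = \max_{a \in \mathcal{A}} \langle a, \hat{z}_{n|n-1}\rangle$. Under Assumption \ref{assumption:unit_sphere}, assuming $\mathcal{A}$ is rich enough to attain the direction $\hat{z}_{n|n-1}/\Vert\hat{z}_{n|n-1}\Vert_2$, this equals $\Vert\hat{z}_{n|n-1}\Vert_2$. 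I would then apply the second bullet of Lemma \ref{lemma:Kalman_Facts} with $S = I_d$, giving $\mathbb{E}[\Vert z_n\Vert_2^2 \mid \mathcal{F}_{n-1}] = \Vert\hat{z}_{n|n-1}\Vert_2^2 + \mbox{tr}(P_{n|n-1})$, and rearrange to obtain $V_n(z_n) = \sqrt{\mathbb{E}[\Vert z_n\Vert_2^2 \mid \mathcal{F}_{n-1}] - \mbox{tr}(P_{n|n-1})}$, which is exactly the square-root term appearing in \eqref{eq:n-1_step}.

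Next I would substitute this into the $n-1$ step of \eqref{eq:dynamic_programming} and split the conditional expectation into two pieces. The linear piece simplifies immediately: because $a_{n-1}$ is $\mathcal{F}_{n-2}$-measurable, $\mathbb{E}[\langle a_{n-1}, z_{n-1}\rangle \mid \mathcal{F}_{n-2}] = \langle a_{n-1}, \hat{z}_{n-1|n-2}\rangle$, using the tower property (third bullet of Lemma \ref{lemma:Kalman_Facts}). For the other piece, the key structural observation is that the one-step-ahead Riccati update in \eqref{eq:Kalman_Filter}--\eqref{eq:g_definition} gives $P_{n|n-1} = g(P_{n-1|n-2}, a_{n-1})$, so $a_{n-1}$ enters the square-root term only through this covariance. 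Both summands are therefore $\mathcal{F}_{n-2}$-measurable functions of the deterministic choice $a_{n-1}$, allowing the outer $\max$ to be pulled over the sum, yielding exactly the expression \eqref{eq:n-1_step}.

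The routine steps above are straightforward; the conceptually important obstacle is what the identity is meant to display. The outer conditional expectation cannot be commuted with the square root (Jensen), so the term $\mathbb{E}[\sqrt{\mathbb{E}[\Vert z_n\Vert_2^2 \mid \mathcal{F}_{n-1}] - \mbox{tr}(g(P_{n-1|n-2},a))} \mid \mathcal{F}_{n-2}]$ genuinely couples the choice of $a_{n-1}$ to the future prediction error through $g(\cdot, a)$, and simultaneously to the future estimated reward through $\hat{z}_{n|n-1}$ (which is what gives rise to $\mathbb{E}[\Vert z_n\Vert_2^2 \mid \mathcal{F}_{n-1}]$). Isolating this nonlinear, non-separable dependence, rather than any calculational hurdle, is the main content of the theorem and the reason it certifies the failure of the separation principle for this bandit. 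A minor technical caveat, which I would flag in a remark, is that the terminal-stage simplification to $\Vert\hat{z}_{n|n-1}\Vert_2$ tacitly presumes that $\mathcal{A}$ is dense enough in $\mathbb{S}^{d-1}$ to realize the optimizing direction; otherwise \eqref{eq:n-1_step} should be read as an identity over the relaxed action set.
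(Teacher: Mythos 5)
Your proposal is correct and follows essentially the same route as the paper's proof: evaluate the terminal stage to get $V_n(z_n)=\Vert \hat{z}_{n|n-1}\Vert_2$, apply the second bullet of Lemma \ref{lemma:Kalman_Facts} with $S=I_d$ to rewrite this as $\sqrt{\mathbb{E}[\Vert z_n\Vert_2^2\mid\mathcal{F}_{n-1}]-\mbox{tr}(P_{n|n-1})}$, and substitute $P_{n|n-1}=g(P_{n-1|n-2},a)$ into the $n-1$ step. Your added caveat that the reduction $\max_{a\in\mathcal{A}}\langle a,\hat{z}_{n|n-1}\rangle=\Vert\hat{z}_{n|n-1}\Vert_2$ tacitly requires $\mathcal{A}$ to contain the optimizing direction is a fair observation the paper leaves implicit, but it does not change the argument.
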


\begin{proof}
    The solution of the first iteration in the dynamic programming approach \eqref{eq:dynamic_programming} is the action $a \in \mathcal{A}$ that aligns most closely with the state prediction $\hat{z}_{n|n-1}$: 
    \begin{align}
        V_n\left(z_n\right) & = \underset{a \in \mathcal{A}}{\max} ~ \mathbb{E}\left[\left\langle a, z_n \right\rangle \mid \mathcal{F}_{n-1}\right] \nonumber\\
        & = \underset{a \in \mathcal{A}}{\max} ~ \left\langle a, \hat{z}_{n|n-1} \right\rangle. \label{eq:greedy_inspiration} 
    \end{align}

    The action $a\in \mathcal{A}$ that maximizes the function $V_n\left(z_n\right)$ is therefore 
    \begin{equation}\label{eq:myopic_method}
        \underset{a \in \mathcal{A}}{\arg\max} ~ \left\langle a, \hat{z}_{n|n-1} \right\rangle = \frac{\hat{z}_{n|n-1}}{\left\Vert \hat{z}_{n|n-1}\right\Vert_2}, 
    \end{equation}
    providing the expression of the function $V_n\left(z_n\right)$:
    \begin{equation}
        V_n\left(z_n\right) = \left\Vert \hat{z}_{n|n-1}\right\Vert_2. \nonumber 
    \end{equation}

    The second iteration of the dynamic programming approach \eqref{eq:dynamic_programming} using \eqref{eq:myopic_method} has the following expression 
    \begin{align}
        V_{n-1} \left(z_{n-1}\right) & = \underset{a \in \mathcal{A}}{\max} ~ \mathbb{E}\left[\left\langle a, z_{n-1} \right\rangle + V_n \left(z_n\right)\mid \mathcal{F}_{n-2}\right] \nonumber \\
        & = \underset{a \in \mathcal{A}}{\max} ~ \mathbb{E}\left[\left\langle a, z_{n-1} \right\rangle \mid \mathcal{F}_{n-2}\right] \nonumber \\
        & ~~ +\mathbb{E}\left[\left\Vert \hat{z}_{n|n-1}\right\Vert_2 \mid \mathcal{F}_{n-2}\right], \nonumber   
    \end{align}
    \begin{multline}
        \Rightarrow V_{n-1} \left(z_{n-1}\right) \overset{(a)}{=} \underset{a \in \mathcal{A}}{\max} ~ \mathbb{E}\left[\left\langle a, z_{n-1} \right\rangle \mid \mathcal{F}_{n-2}\right] \nonumber \\
        \\ +\mathbb{E}\left[\sqrt{\mathbb{E}\left[ \left\Vert z_n \right\Vert_2^2 \mid \mathcal{F}_{n-1}\right] - \mbox{tr}\left(P_{n|n-1}\right)} \mid \mathcal{F}_{n-2}\right]. \nonumber
    \end{multline}
    where at the $n-1$ step conditioned on $\mathcal{F}_{n-2}$ we arrive at expression \eqref{eq:n-1_step}. For $(a)$, we used the fact that $\left\Vert \hat{z}_{n|n-1}\right\Vert_2^2 = \mathbb{E}\left[ \left\Vert z_n \right\Vert_2^2 \mid \mathcal{F}_{n-1}\right] -  \mbox{tr}\left(P_{n|n-1}\right)$ which is proven in Lemma \ref{lemma:Kalman_Facts}.
\end{proof}

Theorem \ref{theorem:Separation_Principle} proves two important details about using dynamic programming for solving optimization problem \eqref{eq:optimization_problem}. First, at iteration $n-1$, the value function consists of an optimization problem where the error covariance matrix $P_{n-1|n-2}$ is a function of the action $a \in \mathcal{A}$. This implies that the chosen action directly affects estimation, failing to separate the problems of optimal control and optimal estimation. Second, the iteration \eqref{eq:n-1_step} is a nonlinear, nonconvex function of the action $a \in \mathcal{A}$ where in the general case does not have a closed-form analytic solution. Therefore, continuing the iterations of $V_t\left(z_t\right)$, $t = n-1,\dots,1$ does not provide a closed-form analytic expression. Since computing the optimal control is computationally complex, we will first analyze the regret lower bound, which provides a bound on what is the best a learner can accomplish.

\section{Perspective 2: Difficulty of Selecting the Optimal Action}\label{sec:Lower Bound of Environments modeled as LGDS}

For this section, we provide the lower bound of regret \eqref{eq:regret} for SMAB environments modeled as LGDS. This provides a measure of the environment's difficulty by tracking how hard it is to consistently select the optimal action. The approach we use is to use the \textit{principle of optimality} \cite{bertsekas2012dynamic}, i.e. the optimal policy that solves the optimization problem defined as \eqref{eq:optimization_problem} for $n$ steps is also the optimal policy for any length $n' < n$. Upper bounding the optimal value for the dynamic programming problem provides a lower bound for regret $R_n$. There are two lower bounds that are provided in this section. The first lower bounds is for actions on the unit sphere, i.e. $a \in \mathcal{A} \triangleq \left\{a \in \mathbb{R}^d \mid \left\Vert a\right\Vert_2 = 1\right\}$. This bounds gives intuition to what a policy close to the optimal policy may look like. The next lower bounds is for a discrete number of actions $a \in \mathcal{A}$, $\left\vert \mathcal{A}\right\vert = k$. First, the theorem below provides the lower bound of regret for the actions on the unit sphere. 

\begin{theorem}\label{theorem:lower_bound}
    Let there be the continuous action set $\mathcal{A} = \left\{a \mid \left\Vert a \right\Vert_2 = 1, a \in \mathbb{R}^d\right\}$. Assume that there exists a $P'$ such that $P_{t|t-1} \succeq P'$ for any $t = 1,2,\dots,n$. The lower bound for regret for the action set $\mathcal{A}$ is 
    \begin{equation}\label{eq:regret_lower_bound}
        R_n \geq \sum_{t=1}^n \mathbb{E}\left[\sqrt{\nu_t^\top Z_t \nu_t}\right] - \mathbb{E}\left[\sqrt{\hat{\nu}_t^\top \left(Z_t - P'\right)\hat{\nu}_t}\right]. 
    \end{equation}
    where $Z_t$ is defined to be 
    \begin{equation}
        Z_t \triangleq \mathbb{E}\left[z_t z_t^\top \right] \label{eq:z_covariance}, 
    \end{equation}
    and $\nu_t, \hat{\nu}_t \sim \mathcal{N}\left(\mathbf{0}, I_d\right)$. If $\rho\left(\Gamma\right) < 1$ and $Z_t \rightarrow Z$ and $P_{t|t-1}\rightarrow P$, then regret is satisfies the following inequality
    \begin{equation}\label{eq:regret_lower_bound_inequality_final}
        R_n \geq \sum_{t=1}^n \mathbb{E}\left[\sqrt{\nu_t^\top Z \nu_t}\right] - \mathbb{E}\left[\sqrt{\hat{\nu}_t^\top \left(Z - P\right)\hat{\nu}_t}\right] . 
    \end{equation}
\end{theorem}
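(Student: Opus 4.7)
The plan is to bound the regret round-by-round, reducing each per-round difference to an expectation of a Gaussian quadratic form. Since $\mathcal{A} = \mathbb{S}^{d-1}$, the oracle action from \eqref{eq:a_star_def} is $a_t^* = z_t/\|z_t\|_2$, giving $X_t^* = \|z_t\|_2$. For any admissible action $a_t$, which must be $\mathcal{F}_{t-1}$-measurable, the tower rule together with Cauchy--Schwarz yields $\mathbb{E}\left[\langle a_t, z_t\rangle \mid \mathcal{F}_{t-1}\right] = \langle a_t, \hat{z}_{t|t-1}\rangle \leq \|\hat{z}_{t|t-1}\|_2$, and since $\eta_t$ is zero mean one obtains the per-round bound $R_n \geq \sum_{t=1}^n \mathbb{E}[\|z_t\|_2] - \mathbb{E}[\|\hat{z}_{t|t-1}\|_2]$.

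The next step is to rewrite each of these two norms in the Gaussian parameterization of the theorem. Since \eqref{eq:LGDS} is linear with zero-mean Gaussian noise and $z_0 \sim \mathcal{N}(\mathbf{0},\Sigma_0)$, we have $z_t \sim \mathcal{N}(\mathbf{0}, Z_t)$ with $Z_t$ satisfying the Lyapunov recursion $Z_{t+1} = \Gamma Z_t \Gamma^\top + Q$, so writing $z_t = Z_t^{1/2}\nu_t$ with $\nu_t \sim \mathcal{N}(\mathbf{0}, I_d)$ gives $\|z_t\|_2 = \sqrt{\nu_t^\top Z_t \nu_t}$. For the Kalman prediction, the orthogonal decomposition $z_t = \hat{z}_{t|t-1} + e_{t|t-1}$ together with Lemma \ref{lemma:Kalman_Facts} gives $\mathbb{E}[\hat{z}_{t|t-1}\hat{z}_{t|t-1}^\top] = Z_t - P_{t|t-1}$, while linearity of the Kalman filter makes $\hat{z}_{t|t-1}$ itself zero-mean Gaussian, so $\hat{z}_{t|t-1} \stackrel{d}{=} (Z_t - P_{t|t-1})^{1/2}\hat{\nu}_t$ with $\hat{\nu}_t \sim \mathcal{N}(\mathbf{0}, I_d)$.

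Invoking the hypothesis $P_{t|t-1} \succeq P'$ gives $Z_t - P_{t|t-1} \preceq Z_t - P'$, so by monotonicity of the quadratic form $\|\hat{z}_{t|t-1}\|_2$ is stochastically dominated by $\sqrt{\hat{\nu}_t^\top (Z_t - P')\hat{\nu}_t}$; taking expectations and substituting into the per-round bound gives \eqref{eq:regret_lower_bound}. For the asymptotic statement \eqref{eq:regret_lower_bound_inequality_final} I would pass to the limit in \eqref{eq:regret_lower_bound} using $\rho(\Gamma) < 1$: the Lyapunov iteration then converges to the unique positive semidefinite solution of $Z = \Gamma Z \Gamma^\top + Q$, and $P_{t|t-1} \to P$ by hypothesis, so $Z_t$ and $P'$ can be replaced by $Z$ and $P$.

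The hardest step will be the distributional identification $\hat{z}_{t|t-1} \stackrel{d}{=} (Z_t - P_{t|t-1})^{1/2}\hat{\nu}_t$. For open-loop action sequences, $P_{t|t-1}$ is deterministic and $\hat{z}_{t|t-1}$ is exactly Gaussian with covariance $Z_t - P_{t|t-1}$, so the domination step is immediate. For closed-loop policies, $P_{t|t-1}$ is itself a random variable driven by the observation history through the Riccati recursion and $\hat{z}_{t|t-1}$ need not be jointly Gaussian with $z_t$; the uniform hypothesis $P' \preceq P_{t|t-1}$ is precisely what allows the policy-dependent prediction covariance to be replaced by the policy-independent matrix $Z_t - P'$, so the resulting bound on regret holds uniformly over every admissible policy.
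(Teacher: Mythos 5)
Your proof is correct and arrives at exactly the same per-round quantities as the paper --- the oracle term $\mathbb{E}\left[\left\Vert z_t\right\Vert_2\right] = \mathbb{E}\left[\sqrt{\nu_t^\top Z_t \nu_t}\right]$, the achievable term $\mathbb{E}\left[\left\Vert \hat{z}_{t|t-1}\right\Vert_2\right]$, the covariance identity $\mathbb{E}\left[\hat{z}_{t|t-1}\hat{z}_{t|t-1}^\top\right] = Z_t - P_{t|t-1}$ from Lemma \ref{lemma:Kalman_Facts}, and the monotone replacement of $P_{t|t-1}$ by $P'$ --- but you reach them by a genuinely more elementary route. The paper wraps the per-round bound in a dynamic-programming value iteration $\hat{V}_t$ and invokes the principle of optimality to argue that upper bounding the optimal value lower bounds regret, then unwinds the recursion from $t=n$ down to $t=0$; you instead observe directly that for any $\mathcal{F}_{t-1}$-measurable action the tower rule gives $\mathbb{E}\left[\left\langle a_t, z_t\right\rangle \mid \mathcal{F}_{t-1}\right] = \left\langle a_t, \hat{z}_{t|t-1}\right\rangle \leq \left\Vert \hat{z}_{t|t-1}\right\Vert_2$ by Cauchy--Schwarz, and sum over rounds. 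Your version buys two things: it is uniform over \emph{all} admissible policies without appeal to the (somewhat informally stated) optimality of the value-iteration policy, and it isolates the one genuinely delicate step --- the distributional identification $\hat{z}_{t|t-1} \overset{d}{=} \left(Z_t - P_{t|t-1}\right)^{1/2}\hat{\nu}_t$ under closed-loop action selection, where $P_{t|t-1}$ is history-dependent --- which the paper asserts without comment; your remark that the uniform hypothesis $P' \preceq P_{t|t-1}$ is what makes the bound policy-independent is a point the paper leaves implicit. Both treatments are equally loose on the final asymptotic statement, which silently drops the transient difference between $Z_t$ and $Z$.
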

\begin{proof}
    Let regret $R_n \triangleq \mathbb{E}\left[\sum_{t=1}^n X_t^* - X_t\right]$ where $X_t^* \triangleq \max_{a \in \mathcal{A}}\left\langle a, z_t\right\rangle$. Recall that we can express the regret as the following
    \begin{align}
        R_n & = \sum_{t=1}^n \mathbb{E}\left[X_t^* - X_t \right] \nonumber \\
        & = \sum_{t=1}^n \max_{a \in \mathcal{A}} \left\langle a, z_t \right\rangle - \left\langle a, z_t \right\rangle \nonumber. 
    \end{align}

    To lower bound the regret, we know that the optimal policy $\pi$ that minimizes regret follows the \textit{principle of optimality} \cite{bertsekas2012dynamic}. If we find the optimal value $\mathbb{E}_{\pi_t}\left[X_t \right]$ for each round $t$, then the summation of optimal values $\mathbb{E}_{\pi_t}\left[X_t \right]$ from $t = 1,2,\dots,n$ gives $\sum_{t=1}^n \mathbb{E}_{\pi_t}\left[X_t \right]$ which is optimal. Therefore, by upper bounding $\mathbb{E}_{\pi_t}\left[X_t \right]$, we lower bound the regret. Consider the dynamic programming problem where $\hat{V}_n\left(\hat{z}_{n|n-1}\right) = \max_{a \in \mathcal{A}}\mathbb{E}\left[\left\langle a, \hat{z}_{n|n-1} \right\rangle \right]$ that has the following iteration 
    \begin{equation}
        \hat{V}_t\left(\hat{z}_{t|t-1}\right) = \max_{a \in \mathcal{A}} ~\mathbb{E}\left[\hat{V}_{t+1}\left(\hat{z}_{t+1|t}\right) + \left\langle a, \hat{z}_{t|t-1}\right\rangle \right] \nonumber. 
    \end{equation}
    
    Based on Theorem \ref{theorem:Separation_Principle}, we can observe that 
    \begin{align}
        \hat{V}_n\left(\hat{z}_{n|n-1}\right) & = \max_{a \in \mathcal{A}}\mathbb{E}\left[\left\langle a, \hat{z}_{n|n-1} \right\rangle \right] \nonumber \\
        & \overset{(a)}{=} \mathbb{E}\left[\left\Vert \hat{z}_{n|n-1} \right\Vert_2\right],  \label{eq:value_function_lower}
    \end{align}    
    where for $(a)$ we used \eqref{eq:myopic_method}. Continuing the iteration for $t= n-1$ provides
    \begin{equation} 
        \hat{V}_{n-1}\left(\hat{z}_{n-1|n-2}\right) = \max_{a \in \mathcal{A}} \mathbb{E}\left[\hat{V}_n\left(\hat{z}_{n|n-1}\right) + \left\langle a, \hat{z}_{n-1|n-2}\right\rangle\right] \nonumber 
    \end{equation}
    \begin{multline}
        \Rightarrow \hat{V}_{n-1}\left(\hat{z}_{n-1|n-2}\right) = \max_{a \in \mathcal{A}} \mathbb{E}\left[\left\Vert \hat{z}_{n|n-1} \right\Vert_2\right] \\ +  \mathbb{E}\left[\left\langle a, \hat{z}_{n-1|n-2}\right\rangle\right] \label{eq:cont_lower_1}, 
    \end{multline}

    Based on Theorem \ref{theorem:Separation_Principle}, the term $\mathbb{E}\left[\left\Vert \hat{z}_{n|n-1} \right\Vert_2\right]$ is dependent on $a \in \mathcal{A}$. Therefore, we will use an upper bound  of $\mathbb{E}\left[\left\Vert \hat{z}_{n|n-1} \right\Vert_2\right]$ that is independent on $a \in \mathcal{A}$. First, since $\hat{z}_{t|t-1} = \hat{Z}_{t|t-1}^{1/2} \hat{\nu}$ where $\hat{Z}_{t|t-1} \triangleq \mathbb{E}\left[\hat{z}_{t|t-1}\hat{z}_{t|t-1}^\top\right]$ and $\hat{\nu}_t \sim \mathcal{N}\left(\mathbf{0},I_d\right)$, then $\mathbb{E}\left[\left\Vert \hat{z}_{n|n-1} \right\Vert_2\right]$ can be expressed as 
    \begin{equation}
        \mathbb{E}\left[\left\Vert \hat{z}_{n|n-1} \right\Vert_2\right] = \mathbb{E}\left[\sqrt{\hat{\nu}_t^\top \hat{Z}_{n|n-1} \hat{\nu}_t}\right] \label{eq:cont_lower_2}.
    \end{equation}
    
    Since $z_t = \hat{z}_{t|t-1} + e_{t|t-1}$, then we can express $\hat{Z}_{t|t-1}$ using the following:
    \begin{align}
        Z_t & = \mathbb{E}\left[z_t z_t^\top \right] \nonumber\\
        & = \mathbb{E}\left[(\hat{z}_{t|t-1} + e_{t|t-1}) (\hat{z}_{t|t-1} + e_{t|t-1})^\top\right] \nonumber\\
        & = \mathbb{E}\left[\hat{z}_{t|t-1}\hat{z}_{t|t-1}^\top \right] + \mathbb{E}\left[\hat{z}_{t|t-1} e_{t|t-1}^\top \right]  + \mathbb{E}\left[e_{t|t-1}\hat{z}_{t|t-1}^\top \right] \nonumber \\
        & ~~+ \mathbb{E}\left[e_{t|t-1} e_{t|t-1}^\top \right] \nonumber\\
        & \overset{(b)}{=}  \hat{Z}_{t|t-1} + P_{t|t-1} \nonumber,
    \end{align}
    \begin{equation}
        \Rightarrow \hat{Z}_{t|t-1} = Z_t - P_{t|t-1} \label{eq:Kalman_Filter_orthogonality_principle}, 
    \end{equation}
    where in $(b)$ we used Lemma \ref{lemma:Kalman_Facts} and 
    \begin{equation}
        \mathbb{E}\left[\hat{z}_{t|t-1} e_{t|t-1}^\top \right] = \mathbb{E}\left[\mathbb{E}\left[\hat{z}_{t|t-1} e_{t|t-1}^\top \mid \mathcal{F}_{t-1} \right]\right] = \mathbf{0}, 
    \end{equation}
    
    Therefore, using \eqref{eq:cont_lower_2}, \eqref{eq:Kalman_Filter_orthogonality_principle}, and the detail that $P_{t|t-1} \succeq P'$ for any $t = 1,2,\dots,n$, $\mathbb{E}\left[\left\Vert \hat{z}_{n|n-1} \right\Vert_2\right]$ has the following upper bound 
    \begin{equation} 
        \mathbb{E}\left[\left\Vert \hat{z}_{n|n-1} \right\Vert_2\right] = \mathbb{E}\left[\sqrt{\hat{\nu}_t^\top \left(Z_n - P_{n|n-1}\right)\hat{\nu}_t}\right] \nonumber 
    \end{equation}
    \begin{equation} 
        \Rightarrow \mathbb{E}\left[\left\Vert \hat{z}_{n|n-1} \right\Vert_2\right] \leq \mathbb{E}\left[\sqrt{\hat{\nu}_t^\top \left(Z_n - P'\right)\hat{\nu}_t}\right] \nonumber 
    \end{equation}

    The above implies that \eqref{eq:cont_lower_1} has the following upper bound where now the upper bound of $\mathbb{E}\left[\left\Vert \hat{z}_{n|n-1} \right\Vert_2\right]$ is independent of $a \in \mathcal{A}$:
    \begin{multline}
        \hat{V}_{n-1}\left(\hat{z}_{n-1|n-2}\right) \leq \max_{a \in \mathcal{A}} \mathbb{E}\left[\left\langle a, \hat{z}_{n-1|n-2}\right\rangle\right] \\ +\mathbb{E}\left[\sqrt{\hat{\nu}_t^\top \left(Z_n - P'\right)\hat{\nu}_t}\right]  \label{eq:cont_lower_3}, 
    \end{multline}

    Continuing the iteration for $t = n-1,n-2,\dots,0$ provides 
    \begin{multline}
        \hat{V}_{0}\left(\hat{z}_{0|-1}\right) \leq \max_{a \in \mathcal{A}} \mathbb{E}\left[\left\langle a, \hat{z}_{0|-1}\right\rangle\right] \\ + \sum_{t=1}^n \mathbb{E}\left[\sqrt{\hat{\nu}_t^\top \left(Z_t - P'\right)\hat{\nu}_t}\right]. 
    \end{multline}

    The above leads to the following lower bound of regret:
    \begin{align}
        R_n &= \mathbb{E}\left[\sum_{t=1}^n X_t^* - X_t \right]\nonumber \\
        & = - \hat{V}_0\left(\hat{z}_{0|-1}\right) + \mathbb{E}\left[\sum_{t=1}^n X_t^* \right] \nonumber \\
        & \geq - \sum_{t=0}^n \mathbb{E}\left[\sqrt{\hat{\nu}_t^\top \left(Z_t - P'\right)\hat{\nu}_t}\right] + \sum_{t=1}^n \mathbb{E}\left[\left\Vert z_t \right\Vert_2\right] \nonumber, 
    \end{align}
    leading to inequality \eqref{eq:regret_lower_bound_inequality_final}. 
    
\end{proof}

Based on the Theorem \ref{theorem:lower_bound}, the best a learner can do is dependent on lowest obtainable error covariance matrix $P'$. Therefore, the lower bound states implicitly that the error is accumulating linearly. 

The following theorem provides the lower bound for a finite number of actions $\mathcal{A}$. This offers deeper insight into how the linear accumulation of the error is the cause of a linear increasing lower bound. First, we provide the \textit{Kalman Oracle Action-selection}, Algorithm \ref{alg:Kalman_Oracle}, which utilizes the following \textit{Kalman Oracle}
\begin{equation}
    \begin{cases}
        \tilde{z}_{t+1} & = \Gamma \tilde{z}_t + \Gamma K \left(\mathbf{X}_t - C_{\mathcal{A}} \tilde{z}_t\right)\\
        \tilde{\mathbf{X}}_t & = C_{\mathcal{A}} \tilde{z}_t 
    \end{cases}\label{eq:LGDS_Oracle_Kalman_Filter}.
\end{equation}

The state prediction $\tilde{z}_t \triangleq \mathbb{E}\left[z_t \mid \mathcal{G}_{t-1}\right]$ and $\mathcal{G}_{t-1}$ is the sigma algebra of $\mathbf{X}_0,\dots,\mathbf{X}_{t-1}$. The observation $\mathbf{X}_t \in \mathbb{R}^k$ a vector of the rewards for each action $a \in \mathcal{A}$, i.e. the output of the following LGDS: 
\begin{equation}
    \begin{cases}
        z_{t+1} & = \Gamma z_t + \xi_t \\
        \mathbf{X}_t & = C_{\mathcal{A}} z_t + \begin{pmatrix}
            \eta_t^{(1)} \\
            \vdots \\
            \eta_t^{(k)}
        \end{pmatrix}
    \end{cases} \label{eq:LGDS_Oracle}.
\end{equation}

Finally, $C_{\mathcal{A}}$ in \eqref{eq:LGDS_Oracle_Kalman_Filter} and \eqref{eq:LGDS_Oracle} and $K \in \mathbb{R}^{d \times k}$ are defined to be 
\begin{align}
    C_{\mathcal{A}} & \triangleq \begin{pmatrix}
        a_1 & \dots & a_k
    \end{pmatrix}^\top \in \mathbb{R}^{k \times d} \label{eq:c_mathcal_a} \\
    K & \triangleq P C_{\mathcal{A}}^\top \left(C_{\mathcal{A}}P C_{\mathcal{A}}^\top  + \sigma^2 I_k \right)^{-1} \nonumber \\
    P_{\mathcal{A}} & = \Gamma P_{\mathcal{A}} \Gamma^\top + Q \nonumber \\
    & ~~~~~~~~~~~~~- \Gamma P_{\mathcal{A}} C_{\mathcal{A}}^\top \left(C_{\mathcal{A}}P_{\mathcal{A}} C_{\mathcal{A}}^\top  + \sigma^2 I_k \right)^{-1}C_{\mathcal{A}}P_{\mathcal{A}}\Gamma^\top,  \nonumber
\end{align}
where $P_{\mathcal{A}}$ is the steady-state error covariance matrix of the Kalman filter state prediction $\tilde{z}_{t|t-1}$ in \eqref{eq:LGDS_Oracle_Kalman_Filter}. In \textit{Kalman Oracle Action-selection}, there exists an \verb|Action Selection|, \verb|Observation|, and \verb|Update|. In \verb|Action Selection|, \textit{Kalman Oracle Action-selection} selects actions $a \in \mathcal{A}$ such that 
\begin{equation}
    \tilde{a}_t \triangleq \underset{a \in \mathcal{A}}{\arg\max} \left\langle a, \tilde{z}_t \right\rangle \label{eq:a_tilde_def}, 
\end{equation}
or the action $a \in \mathcal{A}$ that aligns most closely with the Kalman filter posed in \eqref{eq:LGDS_Oracle_Kalman_Filter} state prediction $\tilde{z}_{t|t-1}$. The \textit{Kalman Oracle Action-selection} then observes $\mathbf{X}_t$ in the \verb|Observation| step from \eqref{eq:LGDS_Oracle} and updates $\tilde{z}_{t|t-1}$ in \eqref{eq:LGDS_Oracle_Kalman_Filter} for the \verb|Update| step. Based on the formulation of the \textit{Kalman Oracle}, it is not applicable to our setting since the learner can only observe the reward $X_t$ for the selected action $a_t \in \mathcal{A}$. However, we use this algorithm as a baseline for analyzing the difficulty of selecting the optimal action $a_t^* \in \mathcal{A}$ \eqref{eq:a_star_def}. 
\begin{algorithm}[!t]
\caption{\textit{Kalman Oracle Action-selection}}\label{alg:Kalman_Oracle}
 \begin{algorithmic}[1]
\STATE \textbf{Input}: $\Gamma$, $\mathcal{A}$, $Q$, $\sigma$, $\Sigma_0$, $z_0$
\FOR{$t=1,2,\dots,n$}
    \STATE \verb|/* Action Selection */|
    \STATE $a_t = \underset{a \in \mathcal{A}}{\arg\max} \left\langle a, \tilde{z}_{t|t-1} \right\rangle $  
    \STATE \verb|/* Observation */|
    \STATE Observe $\mathbf{X}_t = C_{\mathcal{A}} z_t + \begin{pmatrix}
            \eta_t^{(1)} \\
            \vdots \\
            \eta_t^{(k)}
        \end{pmatrix}$ 
    \STATE \verb|/* Update */|
    \STATE Update $\tilde{z}_{t+1}$ in the \textit{Kalman Oracle} \eqref{eq:LGDS_Oracle_Kalman_Filter}
\ENDFOR
\end{algorithmic}
\end{algorithm}

\begin{lemma}\label{lemma:lower_bound_optimal_discrete}
    Let there be the following LGDS \eqref{eq:LGDS_Oracle} and its associated \textit{Kalman Oracle} \eqref{eq:LGDS_Oracle_Kalman_Filter}. The optimal policy for maximizing the sum $\sum_{t=1}^n X_t$ using the state prediction $\tilde{z}_t$ is \eqref{eq:a_tilde_def} which satisfies the Separation Principle. 
\end{lemma}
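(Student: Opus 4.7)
The plan is to proceed by backward induction on the dynamic programming recursion associated with maximizing $\sum_{t=1}^n X_t$ under the \textit{Kalman Oracle} \eqref{eq:LGDS_Oracle_Kalman_Filter}. The crucial structural feature of this setting, which was absent in the bandit formulation behind Theorem \ref{theorem:Separation_Principle}, is that the full observation vector $\mathbf{X}_t = C_{\mathcal{A}} z_t + \mathrm{noise}$ is received at every round and does not depend on the action $a_t$. Consequently, both the state prediction $\tilde{z}_{t+1}$ and its steady-state error covariance $P_{\mathcal{A}}$ evolve autonomously of any action choice, being measurable with respect to $\mathcal{G}_t$ and depending only on the system parameters.

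I would then define the value function recursion in analogy with \eqref{eq:dynamic_programming},
\begin{equation*}
\hat{V}_n(\tilde{z}_n) \triangleq \max_{a \in \mathcal{A}} \mathbb{E}\bigl[\langle a, z_n\rangle \mid \mathcal{G}_{n-1}\bigr],
\end{equation*}
\begin{equation*}
\hat{V}_t(\tilde{z}_t) \triangleq \max_{a \in \mathcal{A}} \mathbb{E}\bigl[\langle a, z_t\rangle + \hat{V}_{t+1}(\tilde{z}_{t+1}) \mid \mathcal{G}_{t-1}\bigr],
\end{equation*}
and argue by backward induction. At the terminal step, since $\mathbb{E}[z_n \mid \mathcal{G}_{n-1}] = \tilde{z}_n$, the maximizer coincides with $\tilde{a}_n$ from \eqref{eq:a_tilde_def}. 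For the inductive step at round $t$, I would split the objective into the immediate expected reward $\mathbb{E}[\langle a, z_t\rangle \mid \mathcal{G}_{t-1}] = \langle a, \tilde{z}_t\rangle$ and the continuation value $\mathbb{E}[\hat{V}_{t+1}(\tilde{z}_{t+1}) \mid \mathcal{G}_{t-1}]$. Because $\tilde{z}_{t+1}$ is generated from $\tilde{z}_t$ and $\mathbf{X}_t$ via the Kalman recursion \eqref{eq:LGDS_Oracle_Kalman_Filter}, and $\mathbf{X}_t$ is unaffected by $a_t$, the conditional distribution of $\tilde{z}_{t+1}$ given $\mathcal{G}_{t-1}$ is independent of the action. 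The continuation term therefore factors out of the maximization, leaving $\max_{a \in \mathcal{A}} \langle a, \tilde{z}_t\rangle$, whose maximizer is precisely $\tilde{a}_t$.

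This chain of reasoning simultaneously establishes both claims in the lemma: the greedy rule \eqref{eq:a_tilde_def} is optimal, and the filter update is decoupled from the action selection, which is the Separation Principle. The main subtlety I expect to encounter is articulating carefully why the continuation value factors out of the inner maximization; this rests on the fact that actions in the \textit{Kalman Oracle} setting carry no informational role whatsoever, in sharp contrast to the original bandit setting where $a_t$ determines the measurement direction $\langle a_t, z_t\rangle + \eta_t$ and thereby directly shapes $P_{t+1|t}$ through the nonlinear map $g(\cdot, a_t)$ of \eqref{eq:g_definition}. Once this measure-theoretic independence is pinned down, the remaining induction is routine and the value function can be written in closed form as a telescoping sum of $\mathbb{E}[\|\tilde{z}_t\|_2]$ terms.
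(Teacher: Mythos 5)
Your proposal is correct and follows essentially the same route as the paper: both run the dynamic-programming recursion backward and observe that, because the \textit{Kalman Oracle} receives the full observation vector $\mathbf{X}_t$ regardless of the chosen action, the continuation value is action-independent and factors out of the maximization, yielding the greedy rule \eqref{eq:a_tilde_def} and the Separation Principle. One small slip in your closing remark: since $\mathcal{A}$ is a discrete set here, the value function telescopes into $\sum_{t=1}^n \left\langle \tilde{a}_t, \tilde{z}_t\right\rangle$ rather than a sum of $\mathbb{E}\left[\left\Vert \tilde{z}_t\right\Vert_2\right]$ terms, which would only hold for the full unit sphere.
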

\begin{proof}
    We know that the optimal policy $\pi$ that minimizes regret follows the \textit{principle of optimality} \cite{bertsekas2012dynamic}. If we find the optimal value $\mathbb{E}_{\pi_t}\left[X_t \right]$ for each round $t$, then the summation of optimal values $\mathbb{E}_{\pi_t}\left[X_t \right]$ from $t = 1,2,\dots,n$ gives $\sum_{t=1}^n \mathbb{E}_{\pi_t}\left[X_t \right]$ which is optimal. Consider the dynamic programming problem where $\tilde{V}_n\left(z_n\right) = \max_{a \in \mathcal{A}}\mathbb{E}\left[\left\langle a, z_n \right\rangle\mid \mathcal{G}_{n-1}\right]$ that has the following iteration 
    \begin{equation}
        \tilde{V}_t\left(z_t\right) = \max_{a \in \mathcal{A}} ~\tilde{V}_{t+1}\left(z_{t+1} \right) + \mathbb{E}\left[\left\langle a, z_t\right\rangle \mid \mathcal{G}_t\right]. 
    \end{equation}
    
    We can observe that 
    \begin{align}
        \tilde{V}_n\left(z_n\right) & = \max_{a \in \mathcal{A}}\mathbb{E}\left[\left\langle a, z_n \right\rangle\mid \mathcal{G}_{n-1}\right] \nonumber \\
        & = \left\langle \tilde{a}_n, \Tilde{z}_{n} \right\rangle \nonumber. 
    \end{align}    
    \begin{align} 
        \tilde{V}_{n-1}\left(z_{n-1}\right) & = \max_{a \in \mathcal{A}} ~\mathbb{E}\left[\tilde{V}_{n}\left(z_{n} \right) + \left\langle a, z_{n-1}\right\rangle \mid \mathcal{G}_{n-2}\right] \nonumber \\
        & = \max_{a \in \mathcal{A}} ~\left\langle \tilde{a}_n, \Tilde{z}_{n} \right\rangle  + \mathbb{E}\left[\left\langle a, z_{n-1}\right\rangle \mid \mathcal{G}_{n-2}\right] \nonumber, 
    \end{align}
    \begin{equation}\label{eq:n-1_value_bound_discrete}
        \Rightarrow \tilde{V}_{n-1}\left(z_{n-1}\right) = \left\langle \tilde{a}_n, \Tilde{z}_{n} \right\rangle + \left\langle \tilde{a}_{n-1}, \Tilde{z}_{n-1} \right\rangle, 
    \end{equation}

    Based on above, we satisfy the Separation Principle. Therefore, we can continue the iteration to get the optimal value $\tilde{V}_0\left(z_0\right)$ which is 
    \begin{equation}
        \tilde{V}_0\left(z_0\right) = \sum_{t=1}^n \left\langle \tilde{a}_t, \tilde{z}_t\right\rangle. 
    \end{equation}

    Therefore, the optimal policy for maximizing $\sum_{t=1}^n X_t$ using $\tilde{z}_t$ is \eqref{eq:a_tilde_def}.     
\end{proof}

Lemma \ref{lemma:lower_bound_optimal_discrete} states that if we can observe all the rewards for each action, then the Separation Principle applies. Therefore, we can compute the optimal policy for each given round $t$, which leads to an one-step action selection method. Using the policy provided in Lemma \ref{lemma:lower_bound_optimal_discrete}, we can prove the lower bound for the discrete action set. 
\begin{theorem}\label{theorem:lower_bound_discrete}
    Let there regret $R_n$ \eqref{eq:regret}. The lower bound for regret $R_n$ is the following inequality 
    \begin{equation}\label{eq:regret_lower_bound_discrete}
        R_n \geq n \sum_{i \in [k]} \sum_{j \in [k]} \sqrt{\frac{2\left(a_j - a_i \right)^\top Z \left(a_j - a_i\right)}{\mbox{tr}\left(\Psi_{i|j}\right)^{2k-2} \left\vert \Tilde{\Sigma}_{i|j}\right\vert }},
    \end{equation}
    where $\Tilde{\Sigma}_{i|j},\Psi_{i|j}$ are defined to be 
    \begin{align}
        \Tilde{\Sigma}_{i|j} & \triangleq A_i \Tilde{Z} A_i^\top - A_i\Tilde{Z} A_j^\top \left(A_j Z A_j^\top \right)^{-1} A_j \Tilde{Z} A_i^\top \label{eq:tilde_sigma_ij}\\
        \Psi_{i|j} & \triangleq \begin{pmatrix}             \Sigma_{i|j}^{-1} & \Sigma_{i|j}^{-1}\Pi_{i|j} \\             \Pi_{i|j}^\top \Sigma_{i|j}^{-1} & \Pi_{i|j}^\top \Sigma_{i|j}^{-1}\Pi_{i|j}         
        \end{pmatrix}\label{eq:psi_ij},
    \end{align}
    which are based on the following defined terms
    \begin{align}
        \begin{pmatrix}
            A_i \left(z_t - e_{t|t-1}\right) \\
            A_j z_t
        \end{pmatrix} & \sim \mathcal{N}\left(\mathbf{0}, \Sigma_{i,j}\right) \nonumber \\
        \Sigma_{i,j} & \triangleq \begin{pmatrix}
            A_i \Tilde{Z} A_i^\top & A_i\Tilde{Z} A_j^\top \\
            A_j \Tilde{Z} A_i^\top & A_j Z A_j^\top 
        \end{pmatrix} \label{eq:oracle_covariance} \\
        A_i \left(z_t - e_{t|t-1}\right) \mid A_j z_t & \sim \mathcal{N}\left( \Pi_{i|j}z_t , \Tilde{\Sigma}_{i|j}\right) \nonumber \\
        A_i & \triangleq \begin{pmatrix}
            a_i - a_1' & \dots & a_i - a_{k-1}'
        \end{pmatrix}^\top \nonumber  \\
        \Pi_{i|j} & \triangleq A_i\Tilde{Z} A_j^\top \left(A_j Z A_j^\top \right)^{-1} A_j  \nonumber . 
    \end{align}
\end{theorem}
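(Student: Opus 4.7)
The plan is to use the Kalman Oracle of Algorithm \ref{alg:Kalman_Oracle} as an idealized benchmark. Because the Oracle observes the full reward vector $\mathbf{X}_t$ at every round rather than a single entry, it strictly dominates any feasible learner, and by Lemma \ref{lemma:lower_bound_optimal_discrete} its greedy policy $\tilde{a}_t$ is in fact optimal in that richer information model. Hence for any realizable policy,
\begin{equation}
    R_n \;\geq\; \sum_{t=1}^n \mathbb{E}\bigl[\langle a_t^*, z_t\rangle - \langle \tilde{a}_t, z_t\rangle\bigr],\nonumber
\end{equation}
so the task reduces to lower bounding this per-round mismatch between the Oracle's choice and the true argmax $a_t^*$. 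Since the Oracle's state-prediction error covariance converges to the stationary $P_{\mathcal{A}}$ and all relevant Gaussians become time-invariant, a uniform positive per-round bound will yield the factor of $n$ in front of the sum.

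First I would decompose the per-round gap by conditioning on the pair $(i,j)$ where $\tilde{a}_t = a_i$ and $a_t^* = a_j$. On the event that the Oracle picks $a_i$, we have $A_i \tilde z_t \succeq 0$ componentwise, and on the event that the true optimum is $a_j$, we have $A_j z_t \succeq 0$. Writing the instantaneous penalty as $\langle a_j - a_i, z_t\rangle$ and applying a Cauchy--Schwarz-type step with respect to the law of $z_t \sim \mathcal{N}(0, Z)$ yields the factor $\sqrt{2(a_j-a_i)^\top Z (a_j-a_i)}$ appearing in the numerator of \eqref{eq:regret_lower_bound_discrete}; the constant $2$ reflects the truncation to one side of the orthant.

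Next I would evaluate the joint probability of the two orthant events. The pair $\bigl(A_i(z_t - e_{t|t-1}),\,A_j z_t\bigr) = (A_i \tilde z_t, A_j z_t)$ is jointly Gaussian with covariance $\Sigma_{i,j}$ given in \eqref{eq:oracle_covariance}. Conditioning on $A_j z_t$ gives the conditional Gaussian with mean $\Pi_{i|j} z_t$ and covariance $\tilde{\Sigma}_{i|j}$, so the orthant probability becomes an integral of this conditional density over $\{A_i \tilde z_t \succeq 0\}$. Evaluating this integral using a standard Gaussian density-at-the-origin argument produces the factor $|\tilde{\Sigma}_{i|j}|^{-1/2}$, while bounding the polynomial volume of the positive orthant in $\mathbb{R}^{k-1}$ using the quadratic form associated with the precision block $\Psi_{i|j}$ of \eqref{eq:psi_ij} introduces the $\mathrm{tr}(\Psi_{i|j})^{2k-2}$ factor (the exponent $2(k-1)$ corresponds to twice the dimension $k-1$ of the difference vector $A_i y$). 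Summing over all pairs $(i,j) \in [k]\times [k]$ and over the $n$ rounds gives \eqref{eq:regret_lower_bound_discrete}.

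The step I expect to be the main obstacle is not the block-Gaussian algebra for $\tilde{\Sigma}_{i|j}$ and $\Pi_{i|j}$, which is routine Schur-complement computation, but rather producing the explicit polynomial bound on the Gaussian orthant probability that yields precisely the factor $\mathrm{tr}(\Psi_{i|j})^{2k-2}$. This requires a careful Chebyshev-type anti-concentration argument on a $(k-1)$-dimensional half-space, and tracking how the trace of the precision block scales with $k$ while remaining uniformly bounded across $t$ (using the convergence $P_{t|t-1} \to P_{\mathcal{A}}$ guaranteed by Assumption \ref{assumption:controllability} and detectability of $(\Gamma, C_{\mathcal{A}})$) is where the argument must be handled with the most care.
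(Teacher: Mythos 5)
Your route is the paper's route: use the \textit{Kalman Oracle} as a benchmark that dominates any feasible policy, decompose the per-round gap over pairs $(i,j)$ via the law of total expectation, obtain the numerator $\sqrt{2(a_j-a_i)^\top Z (a_j-a_i)}$ from the one-sided (half-normal) conditional expectation of $\langle a_j - a_i, z_t\rangle$, and obtain the denominator from a lower bound on the Gaussian orthant probability of the joint event expressed through $\Sigma_{i,j}$, $\Pi_{i|j}$, and $\tilde{\Sigma}_{i|j}$. The only substantive divergence is the step you explicitly flag as the main obstacle: producing the factor $\mbox{tr}(\Psi_{i|j})^{2k-2}$. No Chebyshev-type anti-concentration argument or orthant-volume bound is needed there. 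The paper's device is elementary: write the exponent of the joint density of the stacked vector $\vec{\zeta}\in\mathbb{R}^{2k-2}$ as $-\tfrac{1}{2}\vec{\zeta}^\top\Psi_{i|j}\vec{\zeta} = -\tfrac{1}{2}\mbox{tr}\left(\vec{\zeta}\vec{\zeta}^\top\Psi_{i|j}\right)$ and apply the PSD trace inequality $\mbox{tr}(AB)\leq\mbox{tr}(A)\,\mbox{tr}(B)$, which gives
\begin{equation}
    \exp\left(-\tfrac{1}{2}\vec{\zeta}^\top\Psi_{i|j}\vec{\zeta}\right) \geq \exp\left(-\tfrac{1}{2}\left\Vert\vec{\zeta}\right\Vert_2^2\,\mbox{tr}\left(\Psi_{i|j}\right)\right), \nonumber
\end{equation}
so the $(2k-2)$-dimensional orthant integral factors into $2k-2$ one-dimensional half-Gaussian integrals, each contributing a multiple of $\sqrt{2\pi/\mbox{tr}\left(\Psi_{i|j}\right)}$; dividing by the normalization $\sqrt{(2\pi)^{2k-2}\left\vert\tilde{\Sigma}_{i|j}\right\vert}$ yields exactly the claimed $\mbox{tr}\left(\Psi_{i|j}\right)^{-(2k-2)/2}\left\vert\tilde{\Sigma}_{i|j}\right\vert^{-1/2}$. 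With that single substitution your sketch closes and matches the paper's proof step for step; also note the paper does not actually invoke convergence of $P_{t|t-1}$ to $P_{\mathcal{A}}$ at this point --- time-invariance enters only through $z_t\sim\mathcal{N}(\mathbf{0},Z)$ with $Z=\Gamma Z\Gamma^\top + Q$, which is how the uniform-in-$t$ bound and hence the factor $n$ arise.
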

\begin{proof}
    Let there be the definition of regret $R_n$ which can be expressed as follows:
    \begin{align}
        R_n & = \sum_{t=1}^n \mathbb{E}\left[X_t^* - X_t\right] \nonumber \\
        & = \sum_{t=1}^n \mathbb{E}\left[\left\langle a_t^* - a_t , z_t \right\rangle \mid \left\langle a - a', z_t \right\rangle \geq 0, a_t^* = a\right] \nonumber, 
    \end{align}
    \begin{multline}\label{eq:regret_breakdown}
        \Rightarrow R_n \overset{(a)}{=} \\ \sum_{t=1}^n \sum_{a,a' \in \mathcal{A}} \mathbb{E}_{z_t}\left[\left\langle a_t^* - a_t , z_t \right\rangle \mid a_t^*, a_t\right]P\left(a_t = a'\mid a_t^* = a\right), 
    \end{multline}
    where in $(a)$ we used the Law of Total of Expectation. Let us assume at round $t$ that the action selected by the \textit{Kalman Oracle} is $a \in \mathcal{A}$. We want to find the probability that the \textit{Kalman Oracle} \eqref{eq:LGDS_Oracle_Kalman_Filter} chooses an action $a' \in \mathcal{A}$ such that $a' \neq a$. The event of this occurring is based on the following sets
    \begin{align}
        \mathcal{E}_t^a & \triangleq \cap_{a' \in \mathcal{A}} \left\{\left\langle a - a', z_t\right\rangle > 0\right\} \nonumber \\
        \Tilde{\mathcal{E}}_t^a & \triangleq \cap_{a' \in \mathcal{A}} \left\{\left\langle a - a', \Tilde{z}_{t|t-1}\right\rangle > 0\right\} \nonumber . 
    \end{align}

    Next, we want to find the distribution of $\left(\left\langle a - a', \Tilde{z}_{t|t-1}\right\rangle,\left\langle a - a', z_t\right\rangle\right)$. Recall that in the Kalman filter the state prediction $z_t = \Tilde{z}_{t|t-1}+\Tilde{e}_{t|t-1}$. Therefore the joint distribution of $\left(\left\langle a - a', \Tilde{z}_{t|t-1}\right\rangle,\left\langle a - a', z_t\right\rangle\right)$ is
    \begin{align}
        & P\left(\mathcal{E}_t^{a_j}\mid \Tilde{\mathcal{E}}_t^{a_i} \right) \nonumber \\
        & ~~ = \int_{\mathbb{R}_{+}^{k-1}}\int_{\mathbb{R}_{+}^{k-1}} P\left(A_i \Tilde{z}_{t|t-1} + \Pi_{i|j} \zeta = \Tilde{\zeta} \mid A_j z_t = \zeta\right)  d\zeta d\Tilde{\zeta} \nonumber \\
        & ~~ = \int_{\mathbb{R}_{+}^{k-1}}\int_{\mathbb{R}_{+}^{k-1}} \frac{\exp\left(-\frac{\left(\Tilde{\zeta} - \Pi_{i|j} \zeta\right)^\top \Tilde{\Sigma}_{i|j}^{-1}\left(\Tilde{\zeta} - \Pi_{i|j} \zeta \right)  }{2}\right)}{\sqrt{\left(2\pi\right)^{2k-2} \left\vert \Tilde{\Sigma}_{i|j}\right\vert }} d\zeta d\Tilde{\zeta} \nonumber \\
        & ~~ \overset{(b)}{=} \int_{\mathbb{R}_{+}^{2k-2}}\frac{\exp\left(-\frac{1}{2}\Vec{\zeta}^\top \Psi_{i|j}   \Vec{\zeta}\right)}{\sqrt{\left(2\pi\right)^{2k-2} \left\vert \Tilde{\Sigma}_{i|j}\right\vert }} d\Vec{\zeta} \nonumber \\
        & ~~ = \int_{\mathbb{R}_{+}^{2k-2}}\frac{\exp\left(-\frac{1}{2}\mbox{tr}\left(\Vec{\zeta}\Vec{\zeta}^\top \Psi_{i|j}   \right)\right)}{\sqrt{\left(2\pi\right)^{2k-2} \left\vert \Tilde{\Sigma}_{i|j}\right\vert }} d\Vec{\zeta} \nonumber \\
        & ~~ \geq \int_{\mathbb{R}_{+}^{2k-2}}\frac{\exp\left(-\frac{1}{2}\mbox{tr}\left(\Vec{\zeta}\Vec{\zeta}^\top \right) \mbox{tr}\left(\Psi_{i|j}\right)   \right)}{\sqrt{\left(2\pi\right)^{2k-2} \left\vert \Tilde{\Sigma}_{i|j}\right\vert }} d\Vec{\zeta} \nonumber \\
        & ~~ = \frac{\prod_{s=1}^{2k-2}\int_0^\infty \exp\left(-\frac{\Vec{\zeta}[s]^2}{2\mbox{tr}\left(\Psi_{i|j}\right)^{-1}} \right) d\Vec{\zeta}[s]}{\sqrt{\left(2\pi\right)^{2k-2} \left\vert \Tilde{\Sigma}_{i|j}\right\vert }}  \nonumber \\
        & ~~ = \frac{\prod_{s=1}^{2k-2} \sqrt{\frac{2\pi}{\mbox{tr}\left(\Psi_{i|j}\right)}}}{\sqrt{\left(2\pi\right)^{2k-2} \left\vert \Tilde{\Sigma}_{i|j}\right\vert }}  \nonumber, 
    \end{align}
    \begin{equation}
        \Rightarrow P\left(\mathcal{E}_t^{a_j}\mid \Tilde{\mathcal{E}}_t^{a_i} \right) \geq \frac{1}{\sqrt{\mbox{tr}\left(\Psi_{i|j}\right)^{2k-2} \left\vert \Tilde{\Sigma}_{i|j}\right\vert }} \nonumber ,
    \end{equation}
    where in $(b)$ we replaced $\tilde{\Sigma}_{i|j}$ with $\Psi_{i|j}$. Finally, we need the expectation $\mathbb{E}_{z_t}\left[\left\langle a_t^* - a_t , z_t \right\rangle \mid a_t^*, a_t\right]$. We know that based on the definition of $a_t^*$, $\left\langle a_t^* - a_t , z_t \right\rangle > 0$. We also know that $z_t$ is a normally distributed random variable $z_t \sim \mathcal{N}\left(\mathbf{0},Z\right)$ where $Z = \Gamma Z \Gamma^\top + Q$. Therefore, the conditional expectation is 
    \begin{equation}
        \mathbb{E}_{z_t}\left[\left\langle a_t^* - a_t , z_t \right\rangle \mid a_t^*, a_t\right] = \sqrt{\frac{2\left(a_t^* - a_t \right)^\top Z \left(a_t^* - a_t \right)}{\pi}}\nonumber. 
    \end{equation}

    Therefore, regret for the \textit{Kalman Oracle} is \eqref{eq:regret_lower_bound_discrete}. 

\end{proof}

Theorems \ref{theorem:lower_bound} and \ref{theorem:lower_bound_discrete} state directly that any policy must have at least a linearly increasing regret rate. The rationale is that the accumulation of the errors increases linearly, which implies that for any round $t$ the policy will choose the suboptimal action with a high probability. However, it is possible to still get a regret that is almost zero if (1) the lower bound error covariance matrix $P' = \mathbf{0}$ for the continuous action-space case or (2) $a_j - a_i$ is always in the null-space of $Z$ found in \eqref{eq:regret_lower_bound} for the discrete action-space case. The same is true for the edge case $k = 1$, since the learner can only select the optimal action. 

The results of this section and Section \ref{sec:difficulty} imply optimality is computationally intractable to obtain and the optimal policy does not guarantee consistent optimal action selection. Therefore, we will propose in the next section to select actions that maximize the reward prediction perturbed by a value. We motivate this strategy as it will be proven that these methods increase linearly similarly to the lower regret bound.

\section{Adding a Perturbation Value}\label{sec:Optimism-Based Methods}

Based on the results of Sections \ref{sec:Lower Bound of Environments modeled as LGDS}, we analyzed that regret is always linearly increasing with respect to error $P_t$ and the state prediction $\hat{z}_t$. Therefore, we propose to analyze algorithms of the following form
\begin{equation}\label{eq:optimistic_action_selection}
    a_t = \underset{a \in \mathcal{A}}{\arg\max} \left\langle a, \hat{z}_{t|t-1}\right\rangle + u_t\left(a\mid P_{t|t-1}\right),
\end{equation}
where $u_t\left(a_t\mid P_{t|t-1}\right) \in \mathbb{R}$, $a \in \mathcal{A}$, is denoted as the optimism term. Actions selected based on \eqref{eq:optimistic_action_selection} can be interpreted as a trade-off between choosing actions that the learner predicts to return the highest reward (i.e. $\arg\max_{a \in \mathcal{A}} \left\langle a, \hat{z}_{t|t-1}\right\rangle$) versus choosing actions based on $u_t\left(a\mid P_{t|t-1}\right)$. The following theorem proves that policies that select actions based on \eqref{eq:optimistic_action_selection} have an regret upper bound that increases linearly, similar to the lower bound in \eqref{eq:regret_lower_bound}. 

\begin{theorem}\label{theorem:regret_bound}
    Let $a_t \in \mathcal{A}$ be the learner's chosen action that returns reward $X_t$ at round $t$. In addition, let $a_t^* \in \mathcal{A}$ be the action that returns highest reward $X_t^*$ at round $t$. For actions selected based on \eqref{eq:optimistic_action_selection}, the upper bound for regret $R_n$ is 
    \begin{multline}\label{eq:Optimistic_Regret_Bound}
        R_n \leq \sum_{t = 1}^n u_t\left(a_t\mid P_{t|t-1}\right) - u_t\left(a_t^*\mid P_{t|t-1}\right) \\ + 2 \left\Vert e_{t|t-1}\right\Vert_2 . 
    \end{multline}
    
    Since $\left\Vert e_{t|t-1}\right\Vert_2 \geq 0$ almost surely occurs, then the upper bound on regret for policies that select actions based on \eqref{eq:optimistic_action_selection} increases at least linearly. 
\end{theorem}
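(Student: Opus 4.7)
The plan is to bound the instantaneous regret $\langle a_t^*, z_t\rangle - \langle a_t, z_t\rangle$ at each round by decomposing it into terms involving the estimation error $e_{t|t-1} = z_t - \hat{z}_{t|t-1}$ and the optimism terms, and then summing across rounds. Because the measurement noise $\eta_t$ is independent and zero-mean, one has $\mathbb{E}[X_t^* - X_t] = \mathbb{E}[\langle a_t^* - a_t, z_t\rangle]$, so it suffices to work with the noise-free inner products.

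First I would insert the state prediction into both terms by writing
$\langle a_t^*, z_t\rangle - \langle a_t, z_t\rangle = \langle a_t^*, e_{t|t-1}\rangle + \bigl[\langle a_t^*, \hat{z}_{t|t-1}\rangle - \langle a_t, \hat{z}_{t|t-1}\rangle\bigr] - \langle a_t, e_{t|t-1}\rangle$.
Next I would exploit the defining property of the action-selection rule \eqref{eq:optimistic_action_selection}: since $a_t$ maximizes $\langle a, \hat{z}_{t|t-1}\rangle + u_t(a \mid P_{t|t-1})$ over $a \in \mathcal{A}$, the candidate $a_t^*$ satisfies $\langle a_t^*, \hat{z}_{t|t-1}\rangle + u_t(a_t^* \mid P_{t|t-1}) \leq \langle a_t, \hat{z}_{t|t-1}\rangle + u_t(a_t \mid P_{t|t-1})$, which rearranges to $\langle a_t^*, \hat{z}_{t|t-1}\rangle - \langle a_t, \hat{z}_{t|t-1}\rangle \leq u_t(a_t \mid P_{t|t-1}) - u_t(a_t^* \mid P_{t|t-1})$. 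This is exactly the ingredient that turns a maximization inequality into a regret decomposition and is the only place the action rule is used.

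Substituting this bound into the decomposition and applying Cauchy--Schwarz to the two error inner products, together with Assumption \ref{assumption:unit_sphere} which guarantees $\|a_t\|_2 = \|a_t^*\|_2 = 1$, yields $|\langle a_t^*, e_{t|t-1}\rangle| + |\langle a_t, e_{t|t-1}\rangle| \leq 2\|e_{t|t-1}\|_2$. This gives the per-round bound $\langle a_t^*, z_t\rangle - \langle a_t, z_t\rangle \leq u_t(a_t \mid P_{t|t-1}) - u_t(a_t^* \mid P_{t|t-1}) + 2\|e_{t|t-1}\|_2$, and summing from $t = 1$ to $n$ produces \eqref{eq:Optimistic_Regret_Bound}. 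For the linear-growth claim I would note that Assumption \ref{assumption:controllability} ensures the error covariance $P_{t|t-1}$ stays strictly positive definite, so $\|e_{t|t-1}\|_2$ is almost surely strictly positive, and its accumulation across rounds contributes a term growing at least linearly in $n$.

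I do not anticipate a serious obstacle; the main subtlety is the clean cancellation of $\langle a_t, \hat{z}_{t|t-1}\rangle$ between the optimism inequality and the decomposition, which is what forces the bound to involve the difference $u_t(a_t \mid P_{t|t-1}) - u_t(a_t^* \mid P_{t|t-1})$ rather than each optimism term separately. A second point to verify is whether the stated bound is pathwise or in expectation; the argument above is pathwise once one notes that the zero-mean noise drops from the regret definition \eqref{eq:regret} after taking expectation, so both readings are consistent.
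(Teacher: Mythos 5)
Your argument is correct and follows essentially the same route as the paper's proof: the same decomposition of the instantaneous regret via $z_t = \hat{z}_{t|t-1} + e_{t|t-1}$, the same use of the action-selection optimality inequality to replace $\langle a_t^* - a_t, \hat{z}_{t|t-1}\rangle$ by the difference of optimism terms, and the same Cauchy--Schwarz step with unit-norm actions to obtain the $2\Vert e_{t|t-1}\Vert_2$ term. Your added remarks on the noise dropping out in expectation and on positive definiteness of $P_{t|t-1}$ justifying the linear-growth claim are sound refinements of points the paper leaves implicit.
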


\begin{proof}
    Since $z_t = \hat{z}_{t|t-1} + e_{t|t-1}$ where $\hat{z}_{t|t-1}$ is the Kalman filter state prediction and $e_{t|t-1}$ is the error of the state prediction, we can add and subtract $u_t\left(a_t^*\mid P_{t|t-1}\right)$ to instantaneous regret $r_t$ to provide the following expression of $r_t$:
    \begin{multline}
        r_t = \left\langle a_t^*, \hat{z}_{t|t-1} \right\rangle + u_t\left(a_t^*\mid P_{t|t-1}\right) + \left\langle a_t^*,e_{t|t-1} \right\rangle \\ - \left\langle a_t, \hat{z}_{t|t-1} + e_{t|t-1} \right\rangle  - u_t\left(a_t^*\mid P_{t|t-1}\right). \nonumber
    \end{multline}
   
    Since the learner chooses action $a_t \in \mathcal{A}$ at round $t$, then $\left\langle a_t^*, \hat{z}_{t|t-1} \right\rangle + u_t\left(a_t^*\mid P_{t|t-1}\right)$ can be upper bounded as follows:
    \begin{multline}\label{eq:optimistic_action_selection_bound}
        \left\langle a_t^*, \hat{z}_{t|t-1} \right\rangle + u_t\left(a_t^*\mid P_{t|t-1}\right) \leq \left\langle a_t, \hat{z}_{t|t-1} \right\rangle \\ + u_t\left(a_t\mid P_{t|t-1}\right).  
    \end{multline}

    Using inequality \eqref{eq:optimistic_action_selection_bound}, regret has upper bound
    \begin{multline}\label{eq:regret_upper_bound}
        r_t \leq u_t\left(a_t\mid P_{t|t-1}\right) - u_t\left(a_t^*\mid P_{t|t-1}\right) \\ + \left\langle a_t^* - a_t, e_{t|t-1} \right\rangle . 
    \end{multline}

    Finally, since $a_t,a_t^* \in \mathcal{A}$ has norm 1, i.e. $\left\Vert a \right\Vert_2 = 1$ for $a \in \mathcal{A}$, then we can upper bound \eqref{eq:regret_upper_bound} as \begin{multline}\label{eq:regret_error_bound}
        r_t \leq u_t\left(a_t\mid P_{t|t-1}\right) - u_t\left(a_t^*\mid P_{t|t-1}\right) \\ + 2\left\Vert e_{t|t-1}\right\Vert_2 . 
    \end{multline} 

    Therefore, the upper-bound on regret $R_n$ \eqref{eq:regret} is \eqref{eq:Optimistic_Regret_Bound}. 
\end{proof}

In Theorem \ref{theorem:regret_bound}, the inequality \eqref{eq:regret_error_bound} is based only on \eqref{eq:optimistic_action_selection_bound} and the norm of each action $a \in \mathcal{A}$, which is 1. Next, since instantaneous regret $r_t$ is always nonnegative, i.e. $r_t \geq 0$, then according to inequality \eqref{eq:regret_upper_bound} of Theorem \ref{theorem:regret_bound}, if we restrict the design of $u_t\left(a\right) \geq 0$ for $a \in \mathcal{A}$, the following inequality is always satisfied:
\begin{multline}
    \left\langle a_t^*, e_{t|t-1} \right\rangle - u_t\left(a_t^*\mid P_{t|t-1}\right)\geq \\ \left\langle a_t, e_{t|t-1} \right\rangle - u_t\left(a_t\mid P_{t|t-1}\right) \nonumber. 
\end{multline}

Theorem \ref{theorem:regret_bound} implies that if the LGDS \eqref{eq:LGDS} has a stable state matrix $\Gamma$, then the difference between the bound \eqref{eq:Optimistic_Regret_Bound} and Theorem \ref{theorem:lower_bound}'s bound \eqref{eq:regret_lower_bound} is constant. This constant is impacted directly by the magnitude of the optimism term $u_t\left(a_t\mid P_{t|t-1}\right)$ and the error $e_{t|t-1}$. Based on above, if $\left\langle a_t^*, e_{t|t-1} \right\rangle \geq \left\langle a_t, e_{t|t-1} \right\rangle$, then $u_t\left(a_t\mid P_{t|t-1}\right)$ is too large. However, we want $u_t\left(a_t\mid P_{t|t-1}\right)$ to be as close as possible to the magnitude of $\left\langle a_t, e_{t|t-1} \right\rangle$ to lower the upper bound of regret in \eqref{eq:regret_upper_bound}. Therefore, we propose two methods: Kalman filter Upper Confidence Bound (Kalman-UCB) (Algorithm \ref{alg:Kalman-UCB}) and Information filter Directed Exploration for Action-selection (IDEA) (Algorithm \ref{alg:IDEA}). 

In each of the algorithms, there exists the steps \verb|Action Selection|, \verb|Observation|, and \verb|Update|. In each method's \verb|Action Selection|, the learner selects the action with the highest reward prediction perturbed by value, which we will review in the following subsections. For \verb|Observation|, the learner observes the reward $X_t$ which is based on the learner's selected action $a_t$. Finally, in \verb|Update|, the learner updates the Kalman filter posed in \eqref{eq:Kalman_Filter}.

\begin{algorithm}[!t]
\caption{Kalman filter Upper Confidence Bound (Kalman-UCB)}\label{alg:Kalman-UCB}
 \begin{algorithmic}[1]
\STATE \textbf{Input}: $\Gamma$, $\mathcal{A}$, $Q$, $\sigma$, $\Sigma_0$, $z_0$
\FOR{$t=1,2,\dots,n$}
    \STATE \verb|/* Action Selection */|
    \STATE $a_t = \underset{a \in \mathcal{A}}{\arg\max} \left\langle a, \hat{z}_{t|t-1} \right\rangle + \sqrt{a^\top P_{t|t-1}a}$  
    \STATE \verb|/* Observation */|
    \STATE Observe $X_t = \left\langle a_t, z_t \right\rangle + \eta_t$
    \STATE \verb|/* Update */|
    \STATE Update $\hat{z}_{t+1|t}$ and $P_{t+1|t}$ in the Kalman filter \eqref{eq:Kalman_Filter}
\ENDFOR
\end{algorithmic}
\end{algorithm}

\begin{algorithm}[!t]
\caption{Information filter Directed Exploration for Action-selection (IDEA)}\label{alg:IDEA}
 \begin{algorithmic}[1]
\STATE \textbf{Input}: $\Gamma$, $\mathcal{A}$, $Q$, $\sigma$, $\Sigma_0$, $z_0$
\FOR{$t=1,2,\dots,n$}
    \STATE \verb|/* Action Selection */|
    \STATE $a_t = \underset{a \in \mathcal{A}}{\arg\max} \left\langle a, \hat{z}_{t|t-1} \right\rangle + \sqrt{\mbox{tr}\left(\frac{\Gamma P_{t|t-1} a a^\top P_{t|t-1}\Gamma^\top}{a^\top P_{t|t-1}a + \sigma^2}\right)}$  
    \STATE \verb|/* Observation */|
    \STATE Observe $X_t = \left\langle a_t, z_t \right\rangle + \eta_t$ 
    \STATE \verb|/* Update */|
    \STATE Update $\hat{z}_{t+1|t}$ and $P_{t+1|t}$ in the Kalman filter \eqref{eq:Kalman_Filter}
\ENDFOR
\end{algorithmic}
\end{algorithm}

\subsection{Optimism in the Face of Uncertainty: Kalman-UCB (Algorithm \ref{alg:Kalman-UCB})}\label{sec:UCB_Method}

Kalman-UCB is based on a principle commonly used for SMAB: optimism in the face of uncertainty. Therefore, Kalman-UCB's perturbation is based on the upper confidence bound on the reward prediction $\left\langle a, \hat{z}_{t|t-1}\right\rangle$, i.e. with a probability of at least $1-\delta$, where $\delta, \in (0,1)$
\begin{equation}
    \left\vert X_t-\left\langle a, \hat{z}_{t|t-1}\right\rangle\right\vert \leq \sqrt{\left(a^\top P_{t|t-1} a+\sigma^2\right)\log\left(1/\delta\right)} \nonumber. 
\end{equation}

Therefore, Kalman-UCB selects actions based on the following optimization problem 
\begin{multline}\label{eq:UCB_Method}
    a_{t+1} = \underset{a \in \mathcal{A}}{\arg\max} \left\langle a, \hat{z}_{t|t-1}\right\rangle \\ + \sqrt{\left(a^\top P_{t|t-1} a\right)\log\left(1/\delta\right)}.
\end{multline}
where $\sigma^2\log\left(1/\delta\right)$ is removed since it is independent of the action $a \in \mathcal{A}$. To study Kalman-UCB's exploration behavior, we will focus on the sequence of actions that only maximize the perturbation $\sqrt{a^\top P_{t|t-1}a}$. The following lemma is provided for theoretical insight. 

\begin{lemma}\label{lemma:periodic_seq}
    Let $P_a$, $a \in \mathcal{A}$, be the solution of the Algebraic Riccati Equation (ARE), i.e. $P_a = g\left(P_a,a\right)$ where $g\left(P_a,a\right)$ is defined in \eqref{eq:g_definition}. If for every action $a \in \mathcal{A}$ there exists another action $a' \in \mathcal{A}$ such that $\sqrt{a^\top P_a a} \leq \sqrt{\left(a'\right)^\top P_a a'}$, every action $a \in \mathcal{A}$ will be sampled periodically. 
\end{lemma}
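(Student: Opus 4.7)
The plan is to argue by contradiction. Suppose some action $a_0 \in \mathcal{A}$ is not sampled infinitely often, i.e.\ there is a time $T$ after which the greedy rule $a_t = \arg\max_{a \in \mathcal{A}} \sqrt{a^\top P_{t|t-1} a}$ never selects $a_0$. Using finiteness of $\mathcal{A}$ together with the pigeonhole principle on the tail $t>T$, I would first extract an action $a^\star \neq a_0$ that appears in consecutive runs of unboundedly increasing length. Intuitively, if no single action were played for arbitrarily long stretches, the tail would have to cycle, but a cycle among $|\mathcal{A}|-1$ actions still forces at least one of them to have arbitrarily long consecutive occurrences (or, by a second refinement, to appear along a subsequence of time steps with converging covariances).

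Next I would analyze what happens during such a run of $a^\star$. The covariance evolves by iterating the Riccati-like map $P \mapsto g(P, a^\star)$ defined in \eqref{eq:g_definition}. Since $P_{a^\star}$ exists by hypothesis, and Assumption \ref{assumption:controllability} together with standard Riccati-iteration results furnish contraction of this map toward its fixed point, a sufficiently long run of $a^\star$ drives $P_{t|t-1}$ into an arbitrarily small neighborhood of $P_{a^\star}$, regardless of the covariance at the start of the run. In particular, $\sqrt{(a^\star)^\top P_{t|t-1} a^\star} \to \sqrt{(a^\star)^\top P_{a^\star} a^\star}$, and for every other $a' \in \mathcal{A}$, $\sqrt{(a')^\top P_{t|t-1} a'} \to \sqrt{(a')^\top P_{a^\star} a'}$ by continuity of the quadratic form in $P$.

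The lemma's hypothesis applied to $a^\star$ gives an $a' \neq a^\star$ with $\sqrt{(a^\star)^\top P_{a^\star} a^\star} \leq \sqrt{(a')^\top P_{a^\star} a'}$. When this inequality is strict, continuity and the convergence above imply that, deep enough into the run, the maximizer in the greedy rule would prefer $a'$ to $a^\star$, contradicting $a_t = a^\star$ on the run. Handling the equality case requires a small extra step: either invoke a tiebreaking convention so that the ``other'' action $a'$ wins, or show that persistent exact equality forces a second branch of the hypothesis (applied now to $a'$) which triggers oscillation incompatible with either action being eventually abandoned. Either way, no action can be permanently dropped, and by pigeonholing one concludes that every $a \in \mathcal{A}$ recurs, giving the periodic sampling.

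The main obstacle is the convergence $P_{t|t-1} \to P_{a^\star}$ along runs, specifically the uniformity of contraction of $g(\cdot, a^\star)$ on the compact set of reachable covariances so that runs of any initial covariance become $\varepsilon$-close to $P_{a^\star}$ after a bounded number of steps. The tiebreaking in the weak-inequality case is a secondary but genuine technical wrinkle; I would prefer to state the hypothesis with strict inequality, or to present the conclusion as ``each action is selected infinitely often'' rather than strict eventual periodicity, to sidestep both issues cleanly.
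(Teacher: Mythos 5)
Your proposal follows essentially the same route as the paper's proof: repeated play of a single action drives $P_{t|t-1}$ toward the ARE fixed point $P_a$, at which point the hypothesis hands the argmax to some other action $a'$, so no action can be retained forever, and determinism of the Riccati recursion is then invoked for the recurrence/periodicity conclusion. The wrinkles you flag --- tie-breaking under the weak inequality $\leq$, and the gap between ``every action is selected infinitely often'' and exact periodicity over a continuous covariance state space --- are genuine, and the paper's own (much terser) proof simply glosses over both.
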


\begin{proof}
    For every action $a \in \mathcal{A}$, the covariance matrix $P_{t|t-1}$ converges exponentially to $P_a$ as $t$ increases, where $P_a$ is the solution of the ARE $P_a = g\left(P_a,a\right)$ where $g\left(P_a,a\right)$ is defined in \eqref{eq:g_definition}. Since for every action $a \in \mathcal{A}$ there exists another action $a' \in \mathcal{A}$, $a' \neq a$, such that $\sqrt{a^\top P_a a} \leq \sqrt{\left(a'\right)^\top P_a a'}$, then 
    \begin{equation}
        a' = \underset{a \in \mathcal{A}}{\arg\max} ~ \sqrt{a^\top P_{t|t-1}a} \nonumber. 
    \end{equation}

    Since this happens for every action $a \in \mathcal{A}$ and $g\left(P_{t|t-1},a\right)$ is deterministic, then $\sqrt{a^\top P_{t|t-1}a}$ is periodic. 
\end{proof}

Lemma \ref{lemma:periodic_seq} states that if there exists two actions $a,a' \in \mathcal{A}$ such that $a^\top P_a a \leq \left(a'\right)^\top P_a a'$ and $\left(a'\right)^\top P_{a'} a' \leq a^\top P_{a'} a$ where $P_a$ and $P_{a'}$ are the stable error covariance matrices of actions $a$ and $a'$, respectively, then the sequence $\left\{\arg\max_{a \in \mathcal{A}} u_t\left(a \mid P_{t|t-1}\right)\right\}_{t=1}^n$ will switch between actions $a$ and $a'$ for $t = 1,2,\dots,n$. This implies that Kalman-UCB has an implicit periodic schedule of actions that it explores. 

Since $P_a$ has different magnitudes for different actions $a\in \mathcal{A}$, this can lead to situations where an action $a' \in \mathcal{A}$ provides a lower $\mbox{tr}\left(P_{a'}\right)$ even though action $a \in \mathcal{A}$ is selected since it maximizes $\sqrt{a^\top P_{t|t-1} a}$. In effect, action $a' \in \mathcal{A}$ lowers the prediction error $\sqrt{a^\top P_{t|t-1} a + \sigma^2}$ for all actions $a \in\mathcal{A}$, implying that selecting this action is more beneficial than lowering each action's error individually. Therefore, the next section will address this perspective.



\subsection{Using Observability: IDEA (Algorithm \ref{alg:IDEA})}\label{sec:Observer Method}


IDEA aims to address the perspective presented in Kalman-UCB: if an action $a' \in \mathcal{A}$ lowers the prediction error for all actions more effectively than each action $a \in \mathcal{A}$ individually, why not explore the LGDS environment by selecting that action $a' \in \mathcal{A}$ repeatedly? To implement this idea, we will approximate the two-step dynamic programming where the continuous set of actions constrained to the unit sphere is used. 

\begin{theorem}\label{theorem:approximation}
    Let there be IDEA which optimization problem \eqref{eq:Observer_Method}. There exists an optimization problem that bounds the 2-step dynamic programming optimization where actions are on the unit sphere
    \begin{multline}\label{eq:upper_bound_triangle}
        V_{n-1}\left(z_{n-1}\right) \leq \max_{a \in \mathcal{A}} \left\langle a, \hat{z}_{n-1|n-2} \right\rangle + 
        \left\Vert \Gamma \hat{z}_{n-1|n-2}\right\Vert_2 \\
        + \sqrt{\mbox{tr}\left(\frac{\Gamma P_{n-1|n-2} aa^\top P_{n-1|n-2} \Gamma^\top}{a^\top P_{n-1|n-2} a + \sigma^2}\right)}. 
    \end{multline}
\end{theorem}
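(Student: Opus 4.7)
The plan is to begin from the closed form of $V_{n-1}(z_{n-1})$ established in Theorem \ref{theorem:Separation_Principle} and reduce its inner square root to a single Euclidean norm. Concretely, when action $a$ is played at round $n-1$, we have $P_{n|n-1} = g(P_{n-1|n-2}, a)$, so by Lemma \ref{lemma:Kalman_Facts} the quantity under the inner square root in \eqref{eq:n-1_step} is exactly $\|\hat{z}_{n|n-1}\|_2^2$. Hence the task reduces to producing an $\mathcal{F}_{n-2}$-measurable upper bound on $\mathbb{E}[\|\hat{z}_{n|n-1}\|_2 \mid \mathcal{F}_{n-2}]$ whose dependence on $a$ is explicit.

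Next I would use the Kalman one-step-predictor recursion to write $\hat{z}_{n|n-1} = \Gamma\hat{z}_{n-1|n-2} + \Gamma K_{n-1}\tilde{X}_{n-1}$, where $\tilde{X}_{n-1} = X_{n-1} - \langle a, \hat{z}_{n-1|n-2}\rangle$ is the innovation. The triangle inequality gives
\begin{equation*}
\|\hat{z}_{n|n-1}\|_2 \leq \|\Gamma\hat{z}_{n-1|n-2}\|_2 + \|\Gamma K_{n-1}\|_2\, |\tilde{X}_{n-1}|.
\end{equation*}
Conditioning on $\mathcal{F}_{n-2}$, only the second term is random, and the innovation is zero-mean with conditional variance $a^\top P_{n-1|n-2}a + \sigma^2$. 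Jensen's inequality then bounds $\mathbb{E}[|\tilde{X}_{n-1}| \mid \mathcal{F}_{n-2}] \leq \sqrt{a^\top P_{n-1|n-2}a + \sigma^2}$.

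The remaining algebra amounts to substituting $K_{n-1} = P_{n-1|n-2}a / (a^\top P_{n-1|n-2}a + \sigma^2)$, after which $\|\Gamma K_{n-1}\|_2\sqrt{a^\top P_{n-1|n-2}a + \sigma^2}$ collapses to $\|\Gamma P_{n-1|n-2} a\|_2 / \sqrt{a^\top P_{n-1|n-2}a + \sigma^2}$. Using $\|v\|_2^2 = \mathrm{tr}(vv^\top)$ with $v = \Gamma P_{n-1|n-2}a$, this equals the square root of the trace expression in the statement. Inserting this bound into \eqref{eq:n-1_step} and taking the maximum over $a \in \mathcal{A}$ yields \eqref{eq:upper_bound_triangle}.

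The main obstacle will be keeping track of conditioning levels: the innovation $\tilde{X}_{n-1}$ is $\mathcal{F}_{n-1}$-measurable but not $\mathcal{F}_{n-2}$-measurable, whereas $\hat{z}_{n-1|n-2}$, $P_{n-1|n-2}$, and the gain $K_{n-1}$ are $\mathcal{F}_{n-2}$-measurable. One has to be careful to pull the deterministic pieces out of $\mathbb{E}[\,\cdot \mid \mathcal{F}_{n-2}]$ before invoking Jensen, and to confirm that the triangle inequality is applied pathwise so that conditional expectation can be passed through afterwards. Once this bookkeeping is done, the inequality follows from one triangle inequality, one Jensen inequality, and a rank-one trace identity.
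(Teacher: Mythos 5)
Your proposal is correct and follows essentially the same route as the paper: reduce the inner square root in \eqref{eq:n-1_step} to $\left\Vert \hat{z}_{n|n-1}\right\Vert_2$ via Lemma \ref{lemma:Kalman_Facts} and $P_{n|n-1}=g\left(P_{n-1|n-2},a\right)$, expand $\hat{z}_{n|n-1}$ through the predictor recursion, apply the triangle inequality, and bound the innovation term, yielding the rank-one trace expression. The only cosmetic difference is that the paper first rewrites the innovation as $\left(a^\top P_{n-1|n-2}a+\sigma^2\right)^{1/2}\omega_{n-1}$ with $\omega_{n-1}\sim\mathcal{N}\left(0,1\right)$ and uses $\mathbb{E}\left[\left\vert \omega_{n-1}\right\vert\right]\leq 1$, whereas you bound $\mathbb{E}\left[\left\vert \tilde{X}_{n-1}\right\vert \mid \mathcal{F}_{n-2}\right]$ directly by its conditional standard deviation via Jensen; these are equivalent.
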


\begin{proof}
    Recall in Theorem \ref{theorem:Separation_Principle} that $V_{n-1}\left(z_{n-1}\right)$ is expressed as \eqref{eq:n-1_step}. Using Lemma \ref{lemma:Kalman_Facts}, we can express $V_{n-1}\left(z_{n-1}\right)$ as follows:
    \begin{multline}
        V_{n-1}\left(z_{n-1}\right) = \max_{a \in \mathcal{A}} \left\langle a, \hat{z}_{n-1|n-2}\right\rangle + \\
        \mathbb{E}\left[\sqrt{\left\Vert \hat{z}_{n|n-1}\right\Vert_2^2 + \mbox{tr}\left(P_{n|n-1} - g\left(P_{n-1|n-2}, a\right)\right)} \mid \mathcal{F}_{n-2}\right] \nonumber,
    \end{multline}
    \begin{multline}
        V_{n-1}\left(z_{n-1}\right) = \max_{a \in \mathcal{A}} \left\langle a, \hat{z}_{n-1|n-2}\right\rangle \\
        +\mathbb{E}\left[\sqrt{\left\Vert \hat{z}_{n|n-1}\right\Vert_2^2} \mid \mathcal{F}_{n-2}\right] \label{eq:dynamic_iteration}.
    \end{multline}
    
    The state prediction $\hat{z}_{n|n-1} \in \mathbb{R}^d$ can be expressed by the following Kalman filter iteration
    \begin{equation}\label{eq:Markov_Formulation}
        \begin{cases}
            \hat{z}_{t+1|t} & = \Gamma \hat{z}_{t|t-1} \\
            & ~~~ + \Gamma P_{t|t-1} a_t \left(a_t^\top P_{t|t-1} a_t + \sigma^2\right)^{-1/2} \omega_t \\
            X_t & = \left\langle a_t, \hat{z}_{t|t-1} \right\rangle + \left(a_t^\top P_{t|t-1} a_t + \sigma^2\right)^{1/2} \omega_t
        \end{cases}, 
    \end{equation}
    where $\omega_t \in \mathbb{R}$ is from the standard normal distribution, i.e. $\omega_t \sim \mathcal{N}\left(0,1\right)$. Using \eqref{eq:Markov_Formulation} we can express \eqref{eq:dynamic_iteration} as 
    \begin{multline}
        V_{n-1}\left(z_{n-1}\right)= \max_{a \in \mathcal{A}} \left\langle a, \hat{z}_{n-1|n-2} \right\rangle 
        \\
        +\mathbb{E}\left[\left\Vert \Gamma \hat{z}_{n-1|n-2} + \frac{\Gamma P_{n-1|n-2} a \omega_{n-1}}{\sqrt{a^\top P_{n-1|n-2}a + \sigma^2 }} \right\Vert_2 \mid \mathcal{F}_{n-2}\right]. \label{eq:problem_with_2-step} 
    \end{multline}

    Using the the triangle inequality \eqref{eq:problem_with_2-step} provides \eqref{eq:upper_bound_triangle}. Finally, the optimization problem in \eqref{eq:upper_bound_triangle} is equivalent to IDEA's action selection strategy as the chosen actions are independent of the norm $\left\Vert \Gamma \hat{z}_{t|t-1}\right\Vert_2$. 

\end{proof}

As shown in Theorem \ref{theorem:approximation}, we approximate the $n-1$ step of the dynamic programming problem with \eqref{eq:upper_bound_triangle}. This approximation introduces a perturbation value that is the $\ell_2$ norm of the find matrix product in $g\left(P_{t|t-1},a_t\right)$ defined in \eqref{eq:g_definition}. This final term is the amount of the error $\Gamma P_{t|t-1} \Gamma^\top 
+Q$ decreases from the feedback $\Gamma K_t \left(X_t-\left\langle a_t,\hat{z}_{t|t-1}\right\rangle\right)$. Therefore, by choosing actions that maximize \eqref{eq:Observer_Method} or \eqref{eq:upper_bound_triangle}, we are balancing between choosing the action that maximizes predicted reward $\left\langle a_t,\hat{z}_{t|t-1}\right\rangle$ versus the action that maximizes the amount of feedback $\Gamma K_t \left(X_t-\left\langle a_t,\hat{z}_{t|t-1}\right\rangle\right)$. Therefore, IDEA selects actions based on the following optimization problem 
\begin{multline}\label{eq:Observer_Method}
    a_t = \underset{a \in \mathcal{A}}{\arg\max} \left\langle a, \hat{z}_{t|t-1}\right\rangle \\ + \sqrt{\mbox{tr}\left(\frac{\Gamma P_{t|t-1} aa^\top P_{t|t-1} \Gamma^\top}{a^\top P_{t|t-1} a + \sigma^2}\right)}.
\end{multline}

\section{Discussion on Kalman-UCB and IDEA Exploration Methodologies}\label{sec:Discussion}

Kalman-UCB and IDEA exploration methodologies are fairly different. Kalman-UCB explores actions with the highest reward prediction error. This can be advantageous if LGDS \eqref{eq:LGDS} lacks an observable action $a \in \mathcal{A}$. IDEA explores by choosing the action that maximizes the feedback error term in the Kalman filter \eqref{eq:Kalman_Filter}. In effect, IDEA minimizes the predicted LGDS state variable error. This is beneficial if there exists an action that minimizes the reward prediction error for all other actions. The next section provides an analysis for comparing the performance of Kalman-UCB and IDEA, where performance will be based on accuracy of selecting the \textit{Oracle}'s action. 

\subsection{Metric of Performance}\label{sec:performance_comparison_theory}

To provide a metric for comparing the performance of Kalman-UCB and IDEA, we first provide the following Lemma \ref{lemma:probability_wrong_action}. Using Lemma \ref{lemma:probability_wrong_action}, we then provide an interval of performance for Kalman-UCB and IDEA, which can compared between the two methods to measure which method will perform better. 

\begin{lemma}\label{lemma:probability_wrong_action}
    Let us assume that the error covariance matrix for each method is equivalent, i.e. $P_{t|t-1} \equiv P$. Also, let $\mu_i \in \mathbb{R}^{2(k-1)}$ and $\hat{\Sigma}_{i,j} \in \mathbb{R}^{2(k-1)\times 2(k-1)}$ be defined to be the vector and matrix 
    \begin{align}
        \mu_i\left(P\right) & \triangleq \begin{pmatrix}
            u_t\left(a_i \mid P \right) - u_t \left(a_1 \mid P\right) \\
            \vdots \\
            u_t\left(a_i \mid P \right) - u_t \left(a_{k-1} \mid P\right) \\
            \mathbf{0}_{k-1}
        \end{pmatrix} \\
        \hat{\Sigma}_{i,j}\left(P\right) & \triangleq \begin{pmatrix}
            A_i \left(Z - P\right) A_i^\top & A_i\left(Z - P\right) A_j^\top \\
            A_j \left(Z - P\right) A_i^\top & A_j Z A_j^\top 
        \end{pmatrix},
    \end{align}
    where $u_t\left(a_i \mid P\right)$ is the perturbation added in an optimism-based method. The probability that an optimism-based chooses an action not equal to the \textit{Oracle}'s action $a$ is 
    \begin{multline}
        P\left(\hat{\mathcal{U}}_t^{a_i} \mid \mathcal{U}_t^{a_j}\right) = \\ \frac{\int_{\mathbb{R}_{+}^{k-1}}\int_{\mathbb{R}_{+}^{k-1}} P\left(A_i \hat{z}_{t|t-1} + \Delta u_i = \hat{\zeta}, A_j z_t = \zeta\right)  d\zeta d\hat{\zeta}}{\int_{\mathbb{R}_{+}^{k-1}}\int_{\mathbb{R}^{k-1}} P\left(A_i \hat{z}_{t|t-1} + \Delta u_i = \hat{\zeta} , A_j z_t = \zeta\right)  d\zeta d\hat{\zeta}} \label{eq:probability_wrong_action}, 
    \end{multline}
    where the distribution in the integral is defined as 
    \begin{multline}
        P\left(A_i \hat{z}_{t|t-1} + \Delta u_i = \hat{\zeta} , A_j z_t = \zeta\right) \\ = \mathcal{N}\left(\mu_i\left(P\right),\hat{\Sigma}_{i,j}\left(P\right)\right) \label{eq:distribution_optimism_action}. 
    \end{multline}
\end{lemma}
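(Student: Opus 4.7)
The plan is to reduce both the optimism-based method's selection event $\hat{\mathcal{U}}_t^{a_i}$ and the Oracle's selection event $\mathcal{U}_t^{a_j}$ to half-space conditions on a single jointly Gaussian vector, identify that vector's mean and covariance, and then apply the definition of conditional probability. First, following the $A_i$ notation introduced in the proof of Theorem \ref{theorem:lower_bound_discrete}, the method chooses $a_i$ exactly when $\langle a_i - a_\ell, \hat{z}_{t|t-1}\rangle + u_t(a_i\mid P) - u_t(a_\ell\mid P) \geq 0$ for every $\ell \neq i$, which is the event $\{A_i \hat{z}_{t|t-1} + \Delta u_i \in \mathbb{R}_+^{k-1}\}$, where $\Delta u_i$ is the vector of differences $u_t(a_i\mid P) - u_t(a_\ell\mid P)$. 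Similarly, the Oracle selects $a_j$ on the event $\{A_j z_t \in \mathbb{R}_+^{k-1}\}$.

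Next, I would identify the joint law of the concatenated vector $V \triangleq (A_i \hat{z}_{t|t-1} + \Delta u_i,\, A_j z_t)^\top \in \mathbb{R}^{2(k-1)}$. Since $\hat{z}_{t|t-1}$ and $z_t$ are jointly zero-mean Gaussian, $V$ is Gaussian up to the deterministic shift $\Delta u_i$ in its first block, so its mean is exactly $\mu_i(P)$. The top-left covariance block is $A_i \hat{Z}_{t|t-1} A_i^\top = A_i(Z - P) A_i^\top$ by the orthogonality identity $\hat{Z}_{t|t-1} = Z - P$ derived in the proof of Theorem \ref{theorem:lower_bound}, and the bottom-right block is simply $A_j Z A_j^\top$. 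For the cross-block, I would compute $\mathbb{E}[\hat{z}_{t|t-1} z_t^\top] = \mathbb{E}[\hat{z}_{t|t-1}(\hat{z}_{t|t-1}+e_{t|t-1})^\top] = \hat{Z}_{t|t-1} = Z-P$ using the orthogonality of the Kalman estimate and its error from Lemma \ref{lemma:Kalman_Facts}, giving the cross-block $A_i(Z-P)A_j^\top$. Collecting the blocks reproduces $\hat{\Sigma}_{i,j}(P)$, which establishes \eqref{eq:distribution_optimism_action}.

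Finally, applying $P(\hat{\mathcal{U}}_t^{a_i}\mid \mathcal{U}_t^{a_j}) = P(\hat{\mathcal{U}}_t^{a_i}\cap \mathcal{U}_t^{a_j})/P(\mathcal{U}_t^{a_j})$ and writing each probability as an integral of the Gaussian density of $V$ produces \eqref{eq:probability_wrong_action}: the numerator restricts both blocks of $V$ to the positive orthant $\mathbb{R}_+^{k-1}$, while the denominator leaves the first block unconstrained over $\mathbb{R}^{k-1}$ and keeps the constraint $A_j z_t \in \mathbb{R}_+^{k-1}$. The main obstacle will be the cross-covariance calculation, since it requires recognizing that the Kalman filter's one-step prediction and the true state have covariance $Z-P$ rather than $Z$; once that orthogonality fact from Lemma \ref{lemma:Kalman_Facts} is in hand, the remainder is elementary assembly of a Gaussian density and an application of the definition of conditional probability.
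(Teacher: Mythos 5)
Your proposal is correct and follows essentially the same route as the paper: reduce both selection events to positive-orthant conditions on the stacked Gaussian vector $\bigl(A_i \hat{z}_{t|t-1} + \Delta u_i,\, A_j z_t\bigr)$, identify its law as $\mathcal{N}\left(\mu_i\left(P\right),\hat{\Sigma}_{i,j}\left(P\right)\right)$, and express the conditional probability as a ratio of orthant integrals. In fact you supply a step the paper leaves implicit, namely the cross-covariance computation $\mathbb{E}\left[\hat{z}_{t|t-1} z_t^\top\right] = \hat{Z}_{t|t-1} = Z - P$ via the orthogonality of the Kalman prediction and its error, which is exactly what justifies the off-diagonal blocks of $\hat{\Sigma}_{i,j}\left(P\right)$.
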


\begin{proof}
    We want to find the probability that an optimism-based method chooses an $a_i \in \mathcal{A}$ such that $a_i \neq a_j$. The event of this occurring is based on the following sets
    \begin{align}
        \mathcal{U}_t^{a_j} & \triangleq \cap_{a' \in \mathcal{A}} \left\{\left\langle a_j - a' , z_t\right\rangle > 0\right\} \\
        \hat{\mathcal{U}}_t^{a_i} & \triangleq \cap_{a' \in \mathcal{A}} \left\{\left\langle 
        a_i - a', \hat{z}_{t|t-1}\right\rangle + \Delta u_i > 0\right\} \nonumber \\
        & = \cap_{a' \in \mathcal{A}} \left\{\left\langle a_i
         - a', z_t - e_{t|t-1}\right\rangle + \Delta u_i > 0\right\}
    \end{align}
    where $\Delta u_i \triangleq u_t\left(a_i\mid P\right) - u_t\left(a'\mid P\right)$. We want to compute the distribution of the event $\mathcal{U}_t^{a_i} \mid \hat{\mathcal{U}}_t^{a_j}$ as follows:
    \begin{multline}
        P\left(\hat{\mathcal{U}}_t^{a_i} \mid \mathcal{U}_t^{a_j}\right) =\\  \int_{\mathbb{R}_{+}^{k-1}}\int_{\mathbb{R}_{+}^{k-1}} P\left(A_i \hat{z}_{t|t-1} + \Delta u_i = \hat{\zeta} \mid A_j z_t = \zeta\right)  d\zeta d\hat{\zeta} \nonumber, 
    \end{multline}
    leading to \eqref{eq:probability_wrong_action}. The distribution in the integral is defined as \eqref{eq:distribution_optimism_action}. 
\end{proof}

The only difference in expected regret for any optimism-based method is $\mu_i$ in \eqref{eq:distribution_optimism_action}. Therefore, instead directly measuring regret as a metric for comparing performances between each optimism-based method, we will instead analyze the Wasserstein metric between two distributions, where the first distribution will be the distribution is \eqref{eq:distribution_optimism_action}, while the second distribution is the distribution $\mathcal{N}\left(\mathbf{0},\Sigma_{i,j}\right)$ where $\Sigma_{i,j}$ is defined as \eqref{eq:oracle_covariance}. 
\begin{multline}\label{eq:metric_comparison}
    \phi\left(i,j \mid P\right) = \left\Vert \mu_i \right\Vert_2 + \mbox{tr}\left(\Sigma_{i,j} +\hat{\Sigma}_{i,j}\left(P\right) \right) \\ - 2\mbox{tr}\left(\left(\hat{\Sigma}_{i,j}\left(P\right)^{1/2} \Sigma_{i,j}\hat{\Sigma}_{i,j}\left(P\right)^{1/2}\right)^{1/2} \right). 
\end{multline}

The interpretation of this metric \eqref{eq:metric_comparison} centers on the following question: Given the distribution of the LGDS state variable $z_t$, to what extent does the perturbation signal $u_t\left(a_t \mid P_{t|t-1}\right)$ impact the reward prediction $\left\langle a, \hat{z}_{t|t-1}\right\rangle$ such that the learner selects the suboptimal action? Consequently, this measure implies that if the perturbation $u_t\left(a\mid P_{t|t-1}\right)$ is small, then the method that uses $u_t\left(a\mid P_{t|t-1}\right)$ will have better performance. We utilize the metric \eqref{eq:metric_comparison} to compare the performance between Kalman-UCB and IDEA with the interval
\begin{equation}\label{eq:interval_performance}
    \begin{pmatrix}
        \min_{i \neq j, a \in \mathcal{A}} \phi\left(i,j \mid P_a\right), \max_{i \neq j ,a \in \mathcal{A}} \phi\left(i,j \mid P_a\right)
    \end{pmatrix}, 
\end{equation}
where $P_a$ represents the steady-state error covariance matrix of the Kalman filter error, which solves the ARE $P_a = g\left(P_a,a\right)$. 

The bounds of performance \eqref{eq:interval_performance} measures the influence of the optimism term $u_t\left(a_t \mid P_{t|t-1}\right)$ on the reward prediction $\left\langle a, \hat{z}_{t|t-1}\right\rangle$. A significant impact implies that the corresponding method will perform worse, while a minor impact indicates better performance. By using an interval with the bounds defined as smallest and largest $\phi\left(i,j \mid P_a\right)$ values, the impact of $u_t\left(a_t \mid P_{t|t-1}\right)$ can be studied for any initialized $P_{0|-1}$.  

\subsection{Performance of other Bandit Algorithms}

There are a number of bandit algorithms that are applicable to our proposed bandit environment posed in \eqref{eq:LGDS}. A well-known method that has been discussed earlier in the introduction is the Upper Confidence Bound (UCB) proposed by Auer, Cesa-Bianchi, and Fischer in \cite{auer2002finite}. This has been extended to non-stationary environments through the Sliding-Window UCB (SW-UCB) proposed by Garivier and Moulines in \cite{garivier2008upper}. The UCB and SW-UCB algorithms are posed as Algorithms \ref{alg:UCB} and \ref{alg:SW-UCB}, respectively. To understand the performance of these algorithms with respect to our proposed environments, we will provide the regret upper bounds in the theorem below. 

\begin{algorithm}[!t]
\caption{Upper Confidence Bound (UCB) Algorithm}\label{alg:UCB}
 \begin{algorithmic}[1]
\STATE \textbf{Input}: $\delta \in (0,1), R$
\STATE \verb|/* Initialization */|
\FOR{$a \in \mathcal{A}$}
    \STATE $N_a \gets 0$ 
    \STATE $S_a \gets 0$ 
    \STATE $\hat{\mu}_a \gets 0$
\ENDFOR
\FOR{$t=1,2,\dots,n$}
    \STATE \verb|/* Action Selection */|
    \STATE $a_t = \underset{a \in \mathcal{A}}{\arg\max} ~ \hat{\mu}_a + \sqrt{\frac{2R^2\log\left(1/\delta\right)}{N_a}}$  
    \STATE \verb|/* Observation */|
    \STATE Observe $X_t = \left\langle a_t, z_t \right\rangle + \eta_t$ 
    \STATE \verb|/* Update */|
    \STATE $N_{a_t} \gets N_{a_t} + 1$ 
    \STATE $S_{a_t} \gets S_{a_t} + X_t$ 
    \STATE $\hat{\mu}_{a_t} \gets \frac{S_{a_t}}{N_{a_t}}$
\ENDFOR
\end{algorithmic}
\end{algorithm}

\begin{algorithm}[!t]
\caption{Sliding Window UCB (SW-UCB) Algorithm}\label{alg:SW-UCB}
 \begin{algorithmic}[1]
\STATE \textbf{Input}: $\delta \in (0,1), R, T$
\STATE \verb|/* Initialization */|
\FOR{$a \in \mathcal{A}$}
    \STATE $\mathcal{T}_a \gets \{\}$ 
    \STATE $N_a \gets 0$ 
    \STATE $S_a \gets 0$ 
    \STATE $\hat{\mu}_a \gets 0$
\ENDFOR
\FOR{$t=1,2,\dots,n$}
    \STATE \verb|/* Action Selection */|
    \STATE $a_t = \underset{a \in \mathcal{A}}{\arg\max} ~ \hat{\mu}_a + \sqrt{\frac{2R^2\log\left(1/\delta\right)}{N_a}}$  
    \STATE \verb|/* Observation */|
    \STATE Observe $X_t = \left\langle a_t, z_t \right\rangle + \eta_t$ 
    \STATE \verb|/* Update */|
    \STATE $\mathcal{T}_{a_t} \gets \mathcal{T}_{a_t} \cup \{t\}$ 
    \FOR{$a \in \mathcal{A}$}
        \STATE $N_a \gets 0$ 
        \STATE $S_a \gets 0$ 
        \FOR{$\tau \in \mathcal{T}_a$}
            \IF{$\tau \in [t-T,t]$}
                \STATE $N_a \gets N_a + 1$ 
                \STATE $S_a \gets S_a + X_t$ 
                \STATE $\hat{\mu}_a \gets \frac{S_a}{N_a}$
            \ENDIF
        \ENDFOR
    \ENDFOR
\ENDFOR
\end{algorithmic}
\end{algorithm}

\begin{theorem}\label{theorem:regret_UCB}
    Let the reward $X_t$ be sampled from the SMAB environment \eqref{eq:LGDS}. UCB found in Algorithm \ref{alg:UCB} and SW-UCB found in Algorithm \ref{alg:SW-UCB} have the following regret upper bound which is satisfied with a probability of at least $1-\delta$ where $\delta \in (0,1)$: 
    \begin{equation}\label{eq:regret_UCB}
        \Rightarrow R_n^{UCB} \leq \max_{a \in \mathcal{A}} \sqrt{\left(3n^2 + n + 1\right)\left(a^\top Z_t a \log\left(1/\delta\right)\right)}.
    \end{equation}
    where $Z_t \triangleq \mathbb{E}\left[z_t z_t^\top\right]$ which is based on the iteration $Z_{t+1} = \Gamma Z_t \Gamma^\top + Q$. 
\end{theorem}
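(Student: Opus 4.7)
The plan is to upper bound each instantaneous regret $r_t = X_t^* - X_t$ on a single high-probability event and then aggregate over $t$. Neither UCB nor SW-UCB exploits the LGDS dynamics, so the bound is driven by the worst-case magnitude of the observed rewards rather than by any meaningful learning rate.

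First, I would characterize the reward distribution under the LGDS. For any unit-norm action $a$ and round $t$, the state $z_t \sim \mathcal{N}(\mathbf{0}, Z_t)$ with $Z_t$ evolving via the Lyapunov recursion $Z_{t+1} = \Gamma Z_t \Gamma^\top + Q$, and $\eta_t \sim \mathcal{N}(0,\sigma^2)$ is independent. Hence $X_t = \langle a, z_t\rangle + \eta_t$ is zero-mean Gaussian with variance $a^\top Z_t a + \sigma^2$, and the standard Gaussian tail bound gives $\mathbb{P}\bigl(|X_t| > \sqrt{2(a^\top Z_t a + \sigma^2)\log(1/\delta')}\bigr) \leq \delta'$ for every fixed $a$ and $t$.

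Second, I would take a union bound over $t=1,\dots,n$ and the relevant actions (the played action $a_t$, the oracle action $a_t^\star$, and, where needed for the UCB argument, the per-arm empirical means $\hat{\mu}_a$) to construct one good event of probability at least $1-\delta$ on which the Gaussian tail bounds hold simultaneously at rescaled level $\delta' \sim \delta/n$. On this event, instantaneous regret satisfies
\begin{equation}
r_t \;\leq\; |X_t^*| + |X_t| \;\leq\; 2\max_{a\in\mathcal{A}}\sqrt{2(a^\top Z_t a + \sigma^2)\log(n/\delta)} \nonumber.
\end{equation}

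Third, I would aggregate the per-round bound. A direct summation already yields a linear-in-$n$ rate; to put it in the stated form $\sqrt{(3n^2+n+1)\,a^\top Z_t a \,\log(1/\delta)}$ I would square via Cauchy--Schwarz, $\bigl(\sum_t r_t\bigr)^2 \leq n\sum_t r_t^2$, and use $(|X_t^*|+|X_t|)^2 \leq 2(X_t^{*2}+X_t^2)$. The polynomial coefficient should then emerge from combining (i) the factor $2$ inside the Gaussian tail, (ii) the square-of-sums inequality applied to $X_t^*$ and $X_t$, and (iii) lower-order terms from the UCB exploration bonus $\sqrt{2R^2\log(1/\delta)/N_a}$ and the initialization rounds that give the additive $n+1$.

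The main obstacle is bookkeeping rather than conceptual difficulty: one must absorb the noise variance $\sigma^2$ and the $\log(n)$ contribution of the union bound into the polynomial factor in front of $\max_a a^\top Z_t a$ without spoiling the stated form, and verify that the identical argument applies to SW-UCB (whose sliding window restricts which samples feed $\hat{\mu}_a$ but does not alter the sub-Gaussian tail behavior of the underlying rewards, so any difference is collapsed into the $3n^2+n+1$ constant). The linear-in-$n$ rate is consistent with Theorem~\ref{theorem:lower_bound}, reflecting that UCB-style algorithms which ignore the LGDS state cannot avoid linear regret in this non-stationary environment.
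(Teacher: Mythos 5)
Your proposal never actually engages with the UCB algorithm: the bound $r_t \leq \left\vert X_t^*\right\vert + \left\vert X_t\right\vert$ holds verbatim for a random agent, so nothing in your argument can produce the specific coefficient $3n^2+n+1$ in \eqref{eq:regret_UCB}. The paper's proof hinges on the index-comparison step you omit. It defines $r_t^{UCB} = \left\langle a_t^*, z_t\right\rangle - \left\langle a_t, z_t\right\rangle$, bounds $\left\langle a_t^*, z_t\right\rangle$ by its Gaussian tail $\sqrt{2\left(a_t^*\right)^\top Z_t a_t^*\log\left(1/\delta\right)}$, rewrites that tail as $\sqrt{N_{a_t^*}}$ times the exploration bonus of $a_t^*$, then invokes the action-selection rule $\hat{\mu}_{a_t^*} + \sqrt{2\left(a_t^*\right)^\top Z_t a_t^*\log\left(1/\delta\right)/N_{a_t^*}} \leq \hat{\mu}_{a_t} + \sqrt{2 a_t^\top Z_t a_t\log\left(1/\delta\right)/N_{a_t}}$ to swap the optimal arm for the played arm. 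The factor $3$ and the ratio $N_{a_t^*}/N_{a_t}$ (bounded by $n$, hence the $3n^2$) come precisely from this substitution and the residual terms it leaves. You mention the exploration bonus only as a source of ``lower-order terms,'' which signals that the structural role of the action-selection rule has been missed: it is the only mechanism by which the played action $a_t$ is related to $a_t^*$ through the empirical means.

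Two further mismatches would prevent your bound from collapsing to the stated form even as a weaker substitute. First, the instantaneous regret here is defined on the noiseless inner products $\left\langle a_t^* - a_t, z_t\right\rangle$ (see the decompositions in the proofs of Theorems \ref{theorem:lower_bound} and \ref{theorem:regret_UCB}), not on the observed rewards; by bounding $\left\vert X_t^*\right\vert + \left\vert X_t\right\vert$ you import the measurement noise variance $\sigma^2$, an additive constant that cannot be absorbed into $\max_{a}a^\top Z_t a$. Second, your union bound over the $n$ rounds replaces $\log\left(1/\delta\right)$ with $\log\left(n/\delta\right)$, again changing the form of \eqref{eq:regret_UCB}. (The paper applies its tail bound per round at level $\delta$ without a union bound, which is its own looseness, but it is the statement you were asked to prove.) What your argument does establish is a valid algorithm-independent linear bound of order $n\sqrt{\left(\max_{a}a^\top Z_t a + \sigma^2\right)\log\left(n/\delta\right)}$; it is just not inequality \eqref{eq:regret_UCB}.
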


\begin{proof}
    For UCB's regret upper bound, we first bound the instantaneous regret $r_t^{UCB} \triangleq \left\langle a_t^*, z_t\right\rangle - \left\langle a_t, z_t\right\rangle$. The instantaneous regret $r_t^{UCB}$ for round $t$ using UCB can be expressed as 
    \begin{align}
        r_t^{UCB} & = \left\langle a_t^*, z_t\right\rangle - \left\langle a_t, z_t\right\rangle \nonumber \\
        & \overset{(a)}{\leq} \sqrt{2\left(a_t^*\right)^\top Z_t a_t^* \log\left(1/\delta\right)} - \left\langle a_t, z_t\right\rangle \nonumber \\
        & = \frac{\sqrt{N_{a_t^*}}}{\sqrt{N_{a_t^*}}}\left(\sqrt{2\left(a_t^*\right)^\top Z_t a_t^* \log\left(1/\delta\right)} - \left\langle a, z_t\right\rangle\right) \nonumber \\
        & = \sqrt{N_{a_t^*}}\sqrt{\frac{2\left(a_t^*\right)^\top Z_t a_t^* \log\left(1/\delta\right)}{N_a}} - \left\langle a_t, z_t\right\rangle \nonumber \\
        & = \sqrt{N_{a_t^*}}\left(\hat{\mu}_{a_t^*} + \sqrt{\frac{2\left(a_t^*\right)^\top Z_t a_t^* \log\left(1/\delta\right)}{N_{a_t^*}}}\right) \nonumber \\
        & ~~~~~~ - \left\langle a_t, z_t\right\rangle - \sqrt{N_{a_t^*}}\hat{\mu}_{a_t^*}\nonumber \\
        & \overset{(b)}{\leq} \sqrt{N_{a_t^*}}\left(\hat{\mu}_{a_t} + \sqrt{\frac{2a_t^\top Z_t a_t \log\left(1/\delta\right)}{N_{a_t}}}\right) \nonumber \\
        & ~~~~~~ - \left\langle a_t, z_t\right\rangle - \sqrt{N_{a_t^*}}\hat{\mu}_{a_t^*}\nonumber , 
    \end{align}
    \begin{multline*}
        \Rightarrow r_t^{UCB} \overset{(c)}{\leq} 3\sqrt{N_{a_t^*}}\sqrt{\frac{2a_t^\top Z_t a_t \log\left(1/\delta\right)}{N_{a_t}}} \\ + \sqrt{2a_t^\top Z_t a_t \log\left(1/\delta\right)} \nonumber. 
    \end{multline*}
    
    In $(a)$ we used the following inequality which is satisfied with a probability of at least $1-\delta$:
    \begin{equation*}
        \left\langle a_t^*, z_t\right\rangle \leq \sqrt{2\left(a_t^*\right)^\top Z_t a_t^* \log\left(1/\delta\right)} . 
    \end{equation*}
    
    In $(b)$ we used the following inequality: 
    \begin{multline*}
        \hat{\mu}_{a_t^*} + \sqrt{\frac{2\left(a_t^*\right)^\top Z_t a_t^* \log\left(1/\delta\right)}{N_{a_t^*}}} \leq \\ 
        \hat{\mu}_{a_t} + \sqrt{\frac{2a_t^\top Z_t a_t \log\left(1/\delta\right)}{N_{a_t}}}. 
    \end{multline*}        
    
    Finally, in $(c)$ we used the following inequality
    \begin{equation*}
        - \left\langle a_t, z_t\right\rangle \leq \sqrt{2a_t^\top Z_t a_t \log\left(1/\delta\right)}.
    \end{equation*}

    Note that regret is the sum of instantaneous regrets, i.e. $R_n = \sum_{t=1}^n r_t^{UCB}$: 
    \begin{equation*}
        R_n^{UCB} \leq \sum_{t=1}^n \left(3\sqrt{\frac{N_{a_t^*}}{N_{a_t}}} + 1\right)\sqrt{2a_t^\top Z_t a_t \log\left(1/\delta\right)} ,
    \end{equation*}
    \begin{multline*}
        \Rightarrow R_n^{UCB} \leq \\ \sqrt{\sum_{t=1}^n \left(3\frac{N_{a_t^*}}{N_{a_t}} + 6\sqrt{\frac{N_{a_t^*}}{N_{a_t}}} + 1\right)\left(2a_t^\top Z_t a_t \log\left(1/\delta\right)\right)},
    \end{multline*}
    leading to inequality \eqref{eq:regret_UCB} which is satisfied with a probability of at least $1-\delta$. 
    
\end{proof}

In Theorem \ref{theorem:regret_UCB}, the regret increases linearly with respect to the covariance of the LGDS state variable $z_t$. Based on the results of Theorem \ref{theorem:lower_bound_discrete}, this verifies that UCB's or SW-UCB's upper regret bound cannot increase slower than linear. Next, UCB's and SW-UCB's regret upper bound increases faster than either IDEA's or Kalman-UCB's regret upper bound found in Theorem \ref{theorem:regret_bound}, inequality \eqref{eq:Optimistic_Regret_Bound}. This is because the error of the statistic $\hat{\mu}_a$ is much larger than the error of the statistic $\left\langle a, \hat{z}_{t|t-1}\right\rangle$.

\section{Numerical Results}\label{sec:Random_Numerical_Comparisons}

For this section, we compare Kalman-UCB (Algorithm \ref{alg:Kalman-UCB}) and IDEA (Algorithm \ref{alg:IDEA}) with Kalman filter Observer Dependent Exploration (KODE) in \cite{gornet2025explorationfreemethodlinearstochastic} and a number of well-known SMAB algorithms. KODE is similar to Kalman-UCB and IDEA but selects actions that align most closely with the Kalman filter state prediction $\hat{z}_{t|t-1}$. For the set of well-known SMAB algorithms, we will compare our two proposed algorithms with UCB (Algorithm \ref{alg:UCB}) proposed by Auer, Cesa-Bianchi, and Fischer in \cite{auer2002finite} and SW-UCB (Algorithm \ref{alg:SW-UCB}) proposed by Garivier and Moulines in \cite{garivier2008upper}. Since our proposed environment samples rewards from a stationary distribution when the state matrix $\Gamma$ eigenvalues are within the unit circle, these are comparable algorithms. Next, we will compare the algorithms with Rexp3 proposed by Besbes and Zeevi in \cite{besbes2014stochastic}, which has proposed a general nonstationary bandit algorithm that addresses environments where the expected reward changes linearly. Finally, since the reward is the inner product of an action vector and an LGDS state variable, we added the linear bandit algorithm OFUL proposed by Abbasi-Yadkori, P\'{a}l, and Szepesv\'{a}ri in \cite{NIPS2011_e1d5be1c}. 

For the LGDS environment in \eqref{eq:LGDS}, we will generate the system parameters and noise statistics from a set of distributions where $k = d = 10$. Each parameter and statistic is independently sampled. For the noise statistic variance, note that $Q = RR^\top$ and $\sigma^2 = r^2$, where $R \sim p$ and $r \sim p$. For the state matrix $\Gamma \in \mathbb{R}^{d\times d}$, we first sampled a matrix $T \sim p$, $T \in \mathbb{R}^{d\times d}$, where each matrix entry of $T$ is independently sampled from the distribution $p$. We then normalize $T$ such that its eigenvalues are within the sphere of length $0.9$, i.e. $\Gamma = \left(0.9/\rho\left(T\right)\right)T$ where $\rho\left(T\right)$ is the spectral radius of matrix $T$. The distributions and their statistics are based on Table \ref{table:parameters}. 

For each distribution of Table \ref{table:parameters}, we generate $10^3$ different LGDS. Each algorithm interacts with the sampled LGDS $10$ different times for an interaction length of $n = 10^3$. Each LGDS state was initialized by computing the LGDS for $10^4$ iterations. In Table \ref{table:final_regret_median}, we have show the fractional difference of regret increased by each method with respect to the \textit{Kalman Oracle Action-selection} method (Algorithm \ref{alg:Kalman_Oracle}). In the table, IDEA (Algorithm \ref{alg:IDEA}), Kalman-UCB (Algorithm \ref{alg:Kalman-UCB}), and KODE \cite{gornet2025explorationfreemethodlinearstochastic} are significantly better than the other compared methods, where the medians plus their IQR's are still lower than the other method's median values for all the distributions besides the Cauchy distribution. This is because the statistic used for predicting the reward $X_t$ in Kalman-UCB and IDEA have significantly lower errors than the other methods. Finally, IDEA's median performance is the best across all the methods while also obtaining the lowest IQR values. 

\begin{table}[t]
    \centering
    \caption{Distributions}
    \label{table:parameters}
    \begin{tabular}{ll}
        \toprule
        \textbf{Distribution} & \textbf{Definition} \\
        \midrule
        Gaussian              & $\mathcal{N}(0,1)$ \\
        Uniform               & $[0,1]$ \\
        Exponential           & $\exp(1)$ \\
        Cauchy                & $X/Y$, $X, Y \sim \mathcal{N}(0,1)$ \\
        Bernoulli             & $P(X=1) = P(X=0) = 0.5$ \\
        \bottomrule
    \end{tabular}
\end{table}

\begin{table}[htbp]
\centering
\label{table:final_regret_median}
\caption{Normalized Regrets}
\resizebox{\linewidth}{!}{\begin{tabular}{lccccc}
\toprule
Method & Gaussian & Cauchy & Uniform & Bernoulli & Exponential \\
\midrule
IDEA & 1.37 (0.86) & 1.82 (8.25) & 0.84 (0.43) & 0.11 (0.08) & 0.08 (0.07) \\
KODE & 1.41 (0.88) & 1.84 (8.33) & 0.88 (0.45) & 0.11 (0.09) & 0.08 (0.07) \\
Kalman UCB & 1.52 (0.95) & 2.40 (12.44) & 0.90 (0.45) & 0.44 (0.22) & 0.57 (0.26) \\
OFUL & 3.94 (2.99) & 7.78 (25.95) & 1.79 (1.16) & 2.66 (1.40) & 3.30 (1.70) \\
Random Agent & 3.95 (2.99) & 7.86 (25.82) & 1.82 (1.14) & 2.90 (1.51) & 3.34 (1.73) \\
Rexp3 & 3.95 (2.98) & 7.85 (25.87) & 1.82 (1.14) & 2.87 (1.51) & 3.32 (1.73) \\
UCB & 3.84 (3.10) & 7.73 (25.47) & 1.72 (1.18) & 2.71 (1.46) & 3.16 (1.70) \\
\bottomrule
\end{tabular}} \\ [0.5em]
\sffamily 
\begin{flushleft}
Values are fractional difference between compared method and \textit{Kalman Oracle Action-selection} (Algorithm \ref{alg:Kalman_Oracle}). Higher values implies that the method's performance is worsening. Table uses statistic Median + (IQR) where IQR is the difference between the third quantile and the first quantile.
\end{flushleft}
\end{table}

\subsection{Numerical Comparisons of the Kalman-UCB versus IDEA}

In this section, we focus our analysis on the two methods: Kalman-UCB (Algorithm \ref{alg:Kalman-UCB}) and IDEA (Algorithm \ref{alg:IDEA}), to better understand the different exploration methodologies used by each method. In addition, it gives us more intuition about the metrics we derived in subsection \ref{sec:performance_comparison_theory}. The environments we use are discussed earlier in this section found in Table \ref{table:parameters}. 

Figure \ref{figure:regret} is a scatter plot where each dot compares the normalized regret values of Kalman-UCB and IDEA (each normalized regret value is a percentage of \textit{Kalman Oracle Action-selection}'s regret). The dashed red line indicates that the regret values for Kalman-UCB and IDEA are comparable. Dots above the red line imply that IDEA is performing better than Kalman-UCB and vice versa. Note that the axes are in logarithmic scale. 

In the figure, each plot is based on the distributions introduced in Table \ref{table:parameters}. Observe that for the Gaussian, Cauchy, and Uniform distributions, Kalman-UCB's and IDEA's normalized regrets are close to the dashed red line. This implies that the performance of each method is comparable. However, for the other distributions, IDEA performs consistently better than Kalman-UCB.  

\begin{figure}[t]
    \centering
    \includegraphics[width=0.7\linewidth]{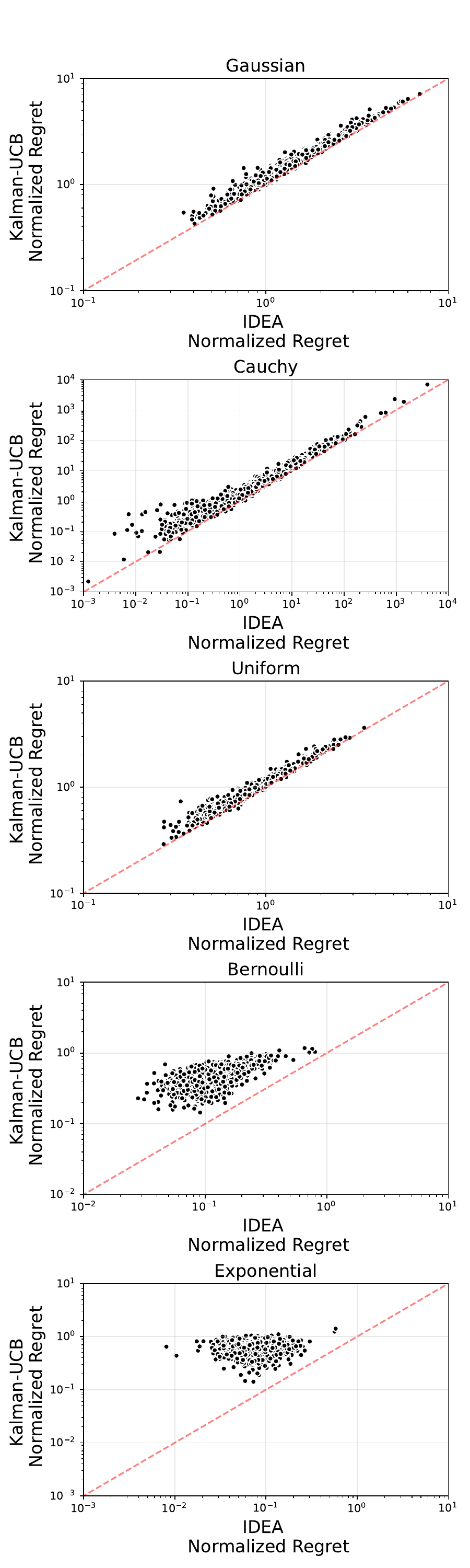}
    \caption{Scatter plot of the normalized regret values of Kalman-UCB versus IDEA. Note that the normalized regret value is the percentage of each algorithm's regret with respect to the \textit{Kalman Oracle Action-selection}'s regret. }
    \label{figure:regret}
\end{figure}

\subsection{Using the Metric to Quantify Performance}

In Section \ref{sec:Discussion}, Subsection \ref{sec:performance_comparison_theory}, a metric for comparing the performance of Kalman-UCB and IDEA was provided. This metric can be used to predict which method will perform better. Figure \ref{figure:comparison_theory} is a scatter plot where each red dot represents the lower bound of the interval while each blue dot represents the upper bound of the intervals. The dashed black line indicates that the lower/upper bound interval is comparable between the two methods.

Based on Figure \ref{figure:comparison_theory}, both the red and blue dots for the Bernoulli and Exponential distributions are above the dashed black line. If we observe Figure \ref{figure:regret}, the dots are consistently above the red line. However, for the Gaussian, Cauchy, and Uniform distributions in Figure \ref{figure:comparison_theory}, the upper bound blue dots are consistently close to the dashed black line. We can observe in Figure \ref{figure:regret} that the black dots are on the dash red line.  Therefore, the intervals help us predict which method will perform better, and we can observe that the upper interval gives a better indication of which algorithm will perform better. 

\begin{figure}[htbp]
    \centering
    \includegraphics[width=0.7\linewidth]{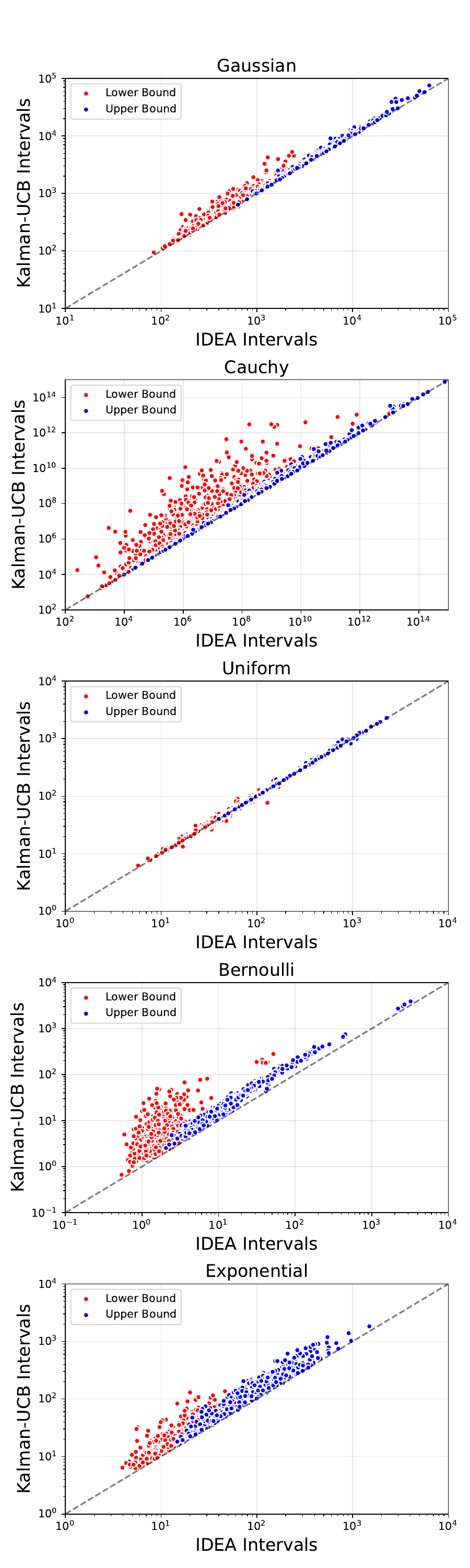}
    \caption{Scatter plot of Kalman-UCB's and IDEA's intervals \eqref{eq:interval_performance}. Blue dots are the upper bound and red dots are the lower bounds. }
    \label{figure:comparison_theory}
\end{figure}

\subsection{Robustness of KODE, IDEA, and Kalman-UCB}

For the final numerical analysis, we will be analyzing the robustness of KODE, IDEA, and Kalman-UCB. Recall that KODE, IDEA, and Kalman-UCB require prior knowledge of the system parameters $\Gamma$ and actions $a \in \mathcal{A}$ and the noise statistics $Q \succeq \mathbf{0}$ and $\sigma > 0$. In many cases, we would be required to identify these parameters and estimate the noise statistics, implying that there will be a degree of error of the identified parameters and estimates. Therefore, we will analyze the normalized regret of each method where the matrices and vectors used by KODE, IDEA, and Kalman-UCB are perturbed. Note that the \textit{Kalman Oracle Action-selection} will use unperturbed matrices and vectors

For each of the matrices and vectors, we first generate a matrix $\Xi$ where each component of the matrix is sampled from a normal distribution. Next, the matrix $\Xi$ is normalized such that $\Xi \gets \Xi/\left\Vert \Xi\right\Vert_F$, where $\left\Vert \cdot\right\Vert_F$ is the Frobenius norm. A matrix $T \gets I_d + \nu \Xi$ is defined, where $I_d$ is the identity matrix with dimension $d$ and $\nu \in \left\{0.1,1,10\right\}$ is a scaling factor. Finally, each matrix is set such that 
\begin{equation}
    \begin{array}{ccc}
        \tilde{\Gamma} \gets T^{-1} \Gamma T,  & \tilde{Q} \gets T^{-1} Q T, & \tilde{C_{\mathcal{A}}} \gets T^{-1} C_{\mathcal{A}} T
    \end{array} \nonumber, 
\end{equation}
where recall that $C_{\mathcal{A}}$ stacks the action vectors (see \eqref{eq:c_mathcal_a}). For each figure, we only perturb one matrix to understand which perturbations are the most impactful. 

Figure \ref{figure:robustness_gamma} is a box plot of KODE's, Kalman-UCB's, and IDEA's normalized regrets. The top row of subplots perturbs matrix $\Gamma$, the middle row of subplots perturbs actions $a \in \mathcal{A}$, and the bottom row perturbs matrix $Q \succeq \mathbf{0}$. The performance of the methods degrade most at noise magnitude $\nu = 10$ for the top and bottom rows, which are perturbations in the system parameters. In addition, the quantiles increase when the noise magnitudes increase to $\nu = 10$ for the same subplots. When comparing the changes in performance if matrix $\Gamma$ is perturbed, there is a $9\%$ decrease in median performance for KODE, a $18\%$ decrease in median performance for IDEA, and a $23\%$ decrease in median performance for Kalman-UCB. As for the actions $a \in \mathcal{A}$, there is a $47\%$ decrease in median performance for KODE, a $48\%$ decrease in median performance for IDEA, and a $40\%$ decrease in median performance for Kalman-UCB. Therefore, KODE is robust to changes of the matrix $\Gamma$ but is sensitive to changes in the actions $a \in \mathcal{A}$, while the opposite is true for Kalman-UCB. Finally, we can observe that IDEA has lower median regret across all the methods except for the case when the state matrix $\Gamma$ is perturbed with a noise magnitude of $\nu = 10$, which is the case where KODE performs best. 

\begin{figure}[h]
    \centering
    \includegraphics[width=\linewidth]{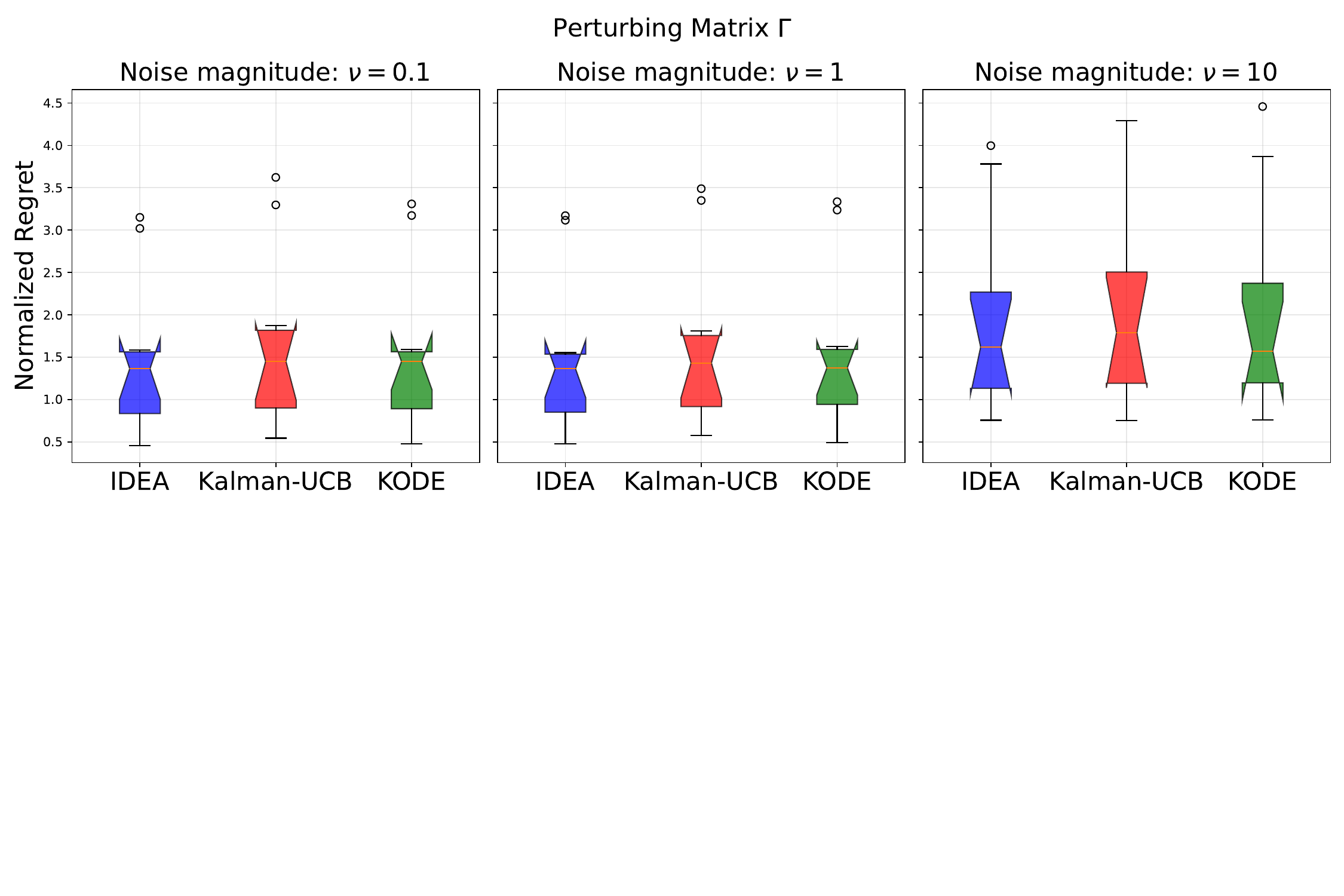}
    \includegraphics[width=\linewidth]{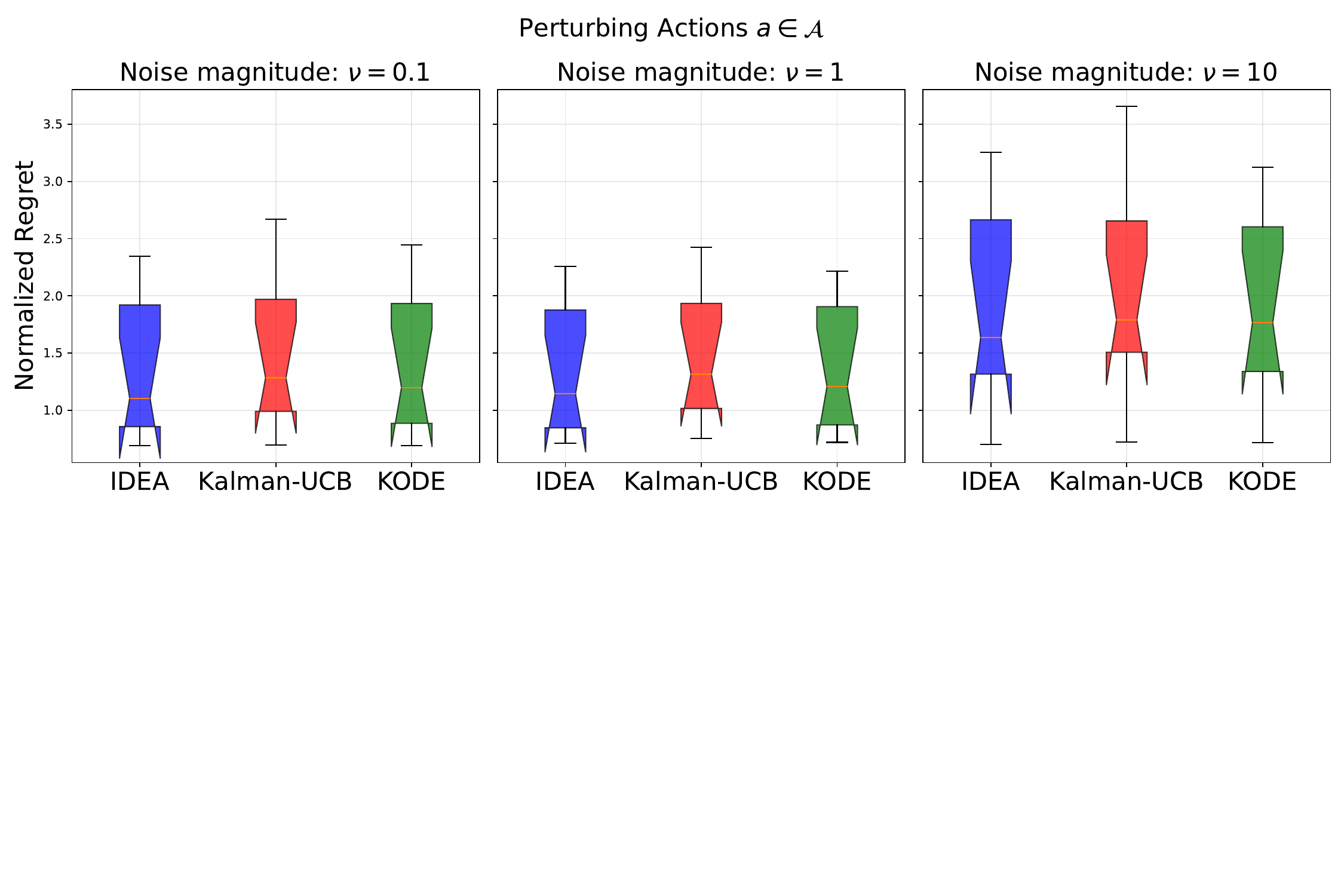}
    \includegraphics[width=\linewidth]{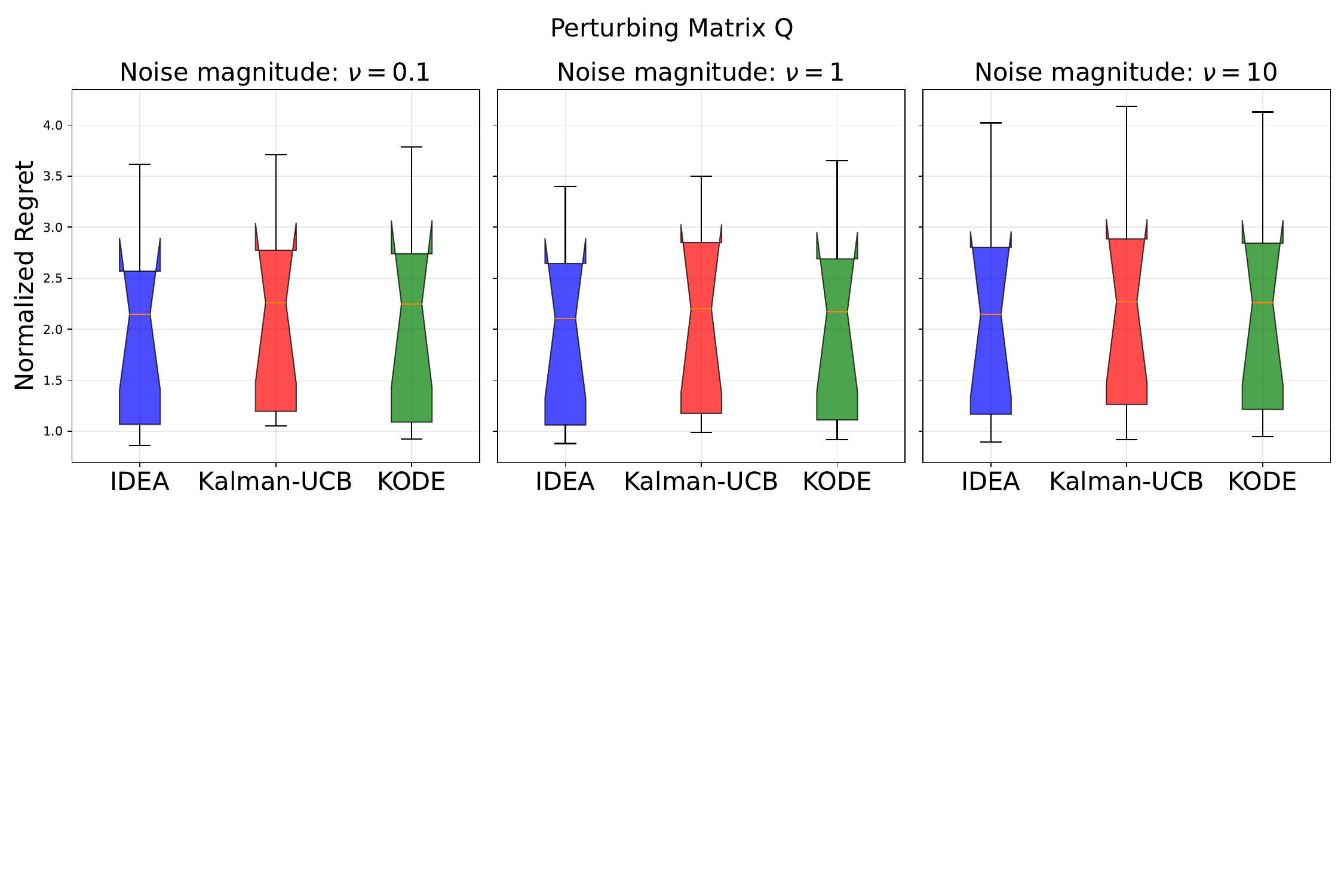}
    \caption{Box plot of each method's normalized regret. Each subplot is a different perturbation magnitude level.}
    \label{figure:robustness_gamma}
\end{figure}

\section{Conclusion}\label{sec:Conclusion}

In this paper, we studied the exploration-exploitation trade-off in a linear bandit environment where the reward is the output a Linear Gaussian Dynamical System (LGDS). The key contribution of this work are two methods: Kalman filter Upper Confidence Bound (Kalman-UCB) and Information filter Directed Exploration Action-selection (IDEA). Kalman-UCB selects actions that maximize the combination of the predicted reward and a term proportional to the error of the reward prediction. For IDEA, this method selects actions that maximize the combination of the predicted reward and a term proportional to how much the action minimizes the error of the Kalman filter's state prediction. Through theoretical analysis, we provided a metric to predict the relative performance between Kalman-UCB and IDEA and verified the results with numerical experiments across various random environments. Our findings suggest that IDEA, which accounts for information feedback in its perturbation term, may outperform Kalman-UCB in LGDS environments with an observable action.

\bibliographystyle{IEEEtran}
\bibliography{IEEEabrv,autosam}{}

\end{document}